\documentclass[11pt]{article}
\usepackage[letterpaper,margin=0.92in]{geometry}
\usepackage[T1]{fontenc}
\usepackage{graphicx,needspace,placeins,float}
\usepackage{amsmath,amssymb,amsthm,mathtools,booktabs,tabularx,microtype,hyperref,enumitem}
\hypersetup{colorlinks=true,linkcolor=blue,urlcolor=blue,citecolor=blue,
  pdftitle={Superpolynomial Lists for Reed--Solomon Codes over Binary Extension Fields}}
\newcommand{\F}{\mathbb F}
\newcommand{\Tr}{\operatorname{Tr}}
\newcommand{\wt}{\operatorname{wt}}
\newcommand{\RS}{\operatorname{RS}}
\newcommand{\gbinom}[3]{\genfrac{[}{]}{0pt}{}{#1}{#2}_{#3}}
\newtheorem{theorem}{Theorem}[section]
\newtheorem{lemma}[theorem]{Lemma}
\newtheorem{proposition}[theorem]{Proposition}
\newtheorem{corollary}[theorem]{Corollary}
\theoremstyle{definition}
\newtheorem{definition}[theorem]{Definition}
\newtheorem{example}[theorem]{Example}
\theoremstyle{remark}
\newtheorem{remark}[theorem]{Remark}
\title{Superpolynomial Lists for Reed--Solomon Codes\\over Binary Extension Fields}
\author{Minshen Zhu\\
\small Independent Researcher\\
\small \texttt{minshen.zh@gmail.com}}
\date{}
\begin{document}
\maketitle
\begin{abstract}
Recent work shows that ordinary Reed--Solomon codes over prime fields admit polynomial-size lists up to list-decoding capacity on every evaluation set. We show a contrasting phenomenon over binary extension fields. For every fixed rate $0<R<1$, there exists $\varepsilon_R>0$ such that, for all sufficiently large $h$, the full-field Reed--Solomon code over $\F_{2^h}$ has a received word with $n^{\Omega_R(\log n)}$ codewords within relative distance $1-R-\varepsilon_R$. To our knowledge, these are the first superpolynomial ordinary Reed--Solomon list-size lower bounds at fixed positive rate and a fixed positive agreement gap above the rate. Thus capacity-approaching polynomial list size for Reed--Solomon codes is not uniform over the underlying field, and the obstruction occurs at every fixed rate. As a puncturing consequence, the same holds for primitive Reed--Solomon codes; the obstruction is also robust under controlled dense puncturing. The proof begins with an exact correspondence that turns certain binary Bose--Chaudhuri--Hocquenghem weight-enumerator coefficients into sizes of single Reed--Solomon lists. A classical binary weight family gives the initial quarter-rate construction, while a quadratic-form enumeration supplies its superpolynomial multiplicity. A common-kernel quotient/descent argument and common-root padding then extend the construction to all rates. We also determine the optimized agreement envelope achieved by the construction.
\end{abstract}

\medskip
\noindent\textbf{Index Terms---}Reed--Solomon codes, list decoding, BCH codes, extension fields, quadratic forms.

\section{Introduction}\label{sec:introduction}
Reed--Solomon (RS) codes are among the most basic and widely used error-correcting codes. A rate-$R$ RS code has relative minimum distance about $1-R$, so unique decoding can correct nearly a $(1-R)/2$ fraction of adversarial errors. List decoding asks for all codewords within a larger radius of a received word. Its natural endpoint is the list-decoding capacity radius $1-R$: for a fixed slack $\varepsilon>0$, the target is to bound the number of codewords within relative distance $1-R-\varepsilon$ by a polynomial in the block length. Equivalently, in agreement language, one asks for polynomially many codewords agreeing with a received word on more than an $R+\varepsilon$ fraction of coordinates.

Two canonical evaluation sets are the full field and the set of nonzero field elements; the latter gives the \emph{primitive} RS code of length $q-1$ over $\F_q$. This leads to a basic field-uniform question:
\begin{quote}
\emph{Is there any fixed positive rate $R$ for which these canonical RS families over all finite fields have list size $n^{O_{R,\varepsilon}(1)}$ at radius $1-R-\varepsilon$, for every fixed $\varepsilon>0$?}
\end{quote}
For these ordinary RS families, the classical algorithmic guarantee is due to Sudan and Guruswami--Sudan: polynomial-time list decoding beyond the unique-decoding radius and up to the Johnson radius $1-\sqrt R$, independently of the evaluation set~\cite{Sudan1997,GuruswamiSudan}. Over prime fields, recent results go much further without changing the code or its evaluation set. Brakensiek, Chen, Putterman, Zhang, and Zheng proved capacity-approaching algorithmic bounds at every fixed rate for arbitrary prescribed evaluation sets when $q\ge C_{R,\varepsilon}n$, with list size $q^{O_{R,\varepsilon}(1)}$~\cite{BCPZZ}. Jeronimo established an $n^{O_\varepsilon(1)}$ list bound independent of $q$ for every prime $q\ge n$ and every prescribed evaluation set~\cite{Jeronimo}. In particular, the canonical full-field and primitive RS codes over prime fields have polynomial-size lists at every fixed positive agreement gap above the rate.

Other capacity-oriented results address related but different code families or evaluation sets. Parvaresh--Vardy codes improved the Johnson decoding radius at sufficiently low rates~\cite{PV}, while folded RS codes subsequently achieved list-decoding capacity at every fixed rate; further improvements include folded and multiplicity codes~\cite{GR,KRSW}. Alternatively, generic evaluation sets and random puncturings yield capacity-achieving ordinary RS codes under suitable alphabet-size conditions~\cite{BGM,AGLZ}. These results do not, by themselves, establish the same guarantee for the canonical evaluation sets over extension fields.

For extension fields, the picture has been less clear. Guruswami and Rudra gave explicit superpolynomial ordinary-RS lists in a vanishing-rate regime, with dimension $k=n^\delta$~\cite{GRlimits}. Ben-Sasson, Kopparty, and Radhakrishnan substantially strengthened this phenomenon using subspace polynomials: their superpolynomial regime likewise has rate tending to zero, while a constant-rate specialization gives polynomially large bad lists at certain dyadic rates~\cite{BKR}. At fixed rate, Hamming-ball averaging forces exponential
lists with agreement $k/n+\Theta(1/\log q)$, where
$k=\lfloor Rn\rfloor$, even for some exponentially large
alphabets (Appendix~\ref{app:hamming-ball-count}).
This is a shrinking-gap rather than a fixed-gap obstruction.
Related shrinking-window results appear in
Crites--Stewart~\cite{CS} and Krachun--Kazanin--Hab\"ock~\cite{KKH}.

Thus, to our knowledge, before this work no superpolynomial
ordinary-RS list was known at both fixed positive rate and
fixed positive agreement gap. On the positive side, generic and random
evaluation-set theorems did not settle the canonical full-field
or primitive codes, while guarantees for every prescribed
evaluation set were available over prime fields. Whether
those canonical codes retain polynomial list size near
capacity over extension fields remained open.

\subsection{Our result}\label{sec:informal-result}
\begin{theorem}[Informal]\label{thm:informal}
For every fixed $0<R<1$, there exists $\varepsilon_R>0$ such that, for all sufficiently large $h$, the full-field RS code of dimension $\lfloor R2^h\rfloor$ over $\F_{2^h}$ has a received word with
\[
 2^{h^2/8-O_R(h)}
 =n^{\frac18\log_2 n-O_R(1)}
 =n^{\Omega_R(\log n)}
\]
codewords within relative distance $1-R-\varepsilon_R$, where $n=2^h$.
\end{theorem}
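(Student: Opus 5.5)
The plan follows the route indicated in the abstract, in three stages: an exact RS--BCH correspondence, a quarter-rate construction with a superpolynomial count, and an extension to all rates.

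\textbf{Step 1: the correspondence.} Fix the full-field RS code of dimension $k$ over $\F_{2^h}$ and a received word $y:\F_{2^h}\to\F_{2^h}$. Take $y$ of the form $y(x)=x^{d}$, or a related monomial or trace-type function of degree just above $k$. A codeword $f$ agrees with $y$ exactly on the zero set of $y-f$. Consider the polynomials $g=y-f$ that have many roots in $\F_{2^h}$ and whose coefficients are forced into a subfield-compatible shape. Their indicator vectors of roots are the codewords of a binary BCH-type code, namely the dual of a code with prescribed zeros. The result I aim for is that the number of $f$ with $|\{x: f(x)=y(x)\}|\ge a$ is at least the number of binary codewords of a suitable BCH code of weight exactly $a$. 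Concretely, a binary vector $c$ supported on a set $S$ with power sums $\sum_{x\in S}x^j=0$ for the designated $j$ corresponds, through Newton's identities, to $\prod_{x\in S}(X-x)$ having vanishing coefficients in the right positions. So it equals $y-f$ with $\deg f<k$, and distinct $S$ give distinct $f$.

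\textbf{Step 2: the quarter-rate construction.} Use a classical binary weight family, for instance codewords of the dual of a double-error-correcting BCH code, or the quadratic-form (Kerdock/second-order Reed--Muller) subcodes. These give weights of the form $2^{h-1}\pm 2^{(h+r)/2-1}$ coming from quadratic forms $\Tr(Q(x))$ of rank $h-r$. Choose $y$ so that the zero sets of $\Tr(\alpha x^{2^i+1}+\dots)+\text{affine}$, restricted by a further trace condition, give agreement about $n/4+\Omega(n)$ with rate $1/4$. To get a fixed gap, pick a rank slightly below full, so the agreement exceeds $n/4$ by a constant fraction. Then count the quadratic forms of the required rank in the relevant family using the classical enumeration of symmetric or alternating bilinear forms over $\F_2$ by rank. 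That count is $2^{h^2/8-O(h)}$ once the family is restricted to the $\approx h/2$ or $h/4$ free Frobenius-twist coefficients that the degree constraint allows. The factor $1/8$ should emerge as $\frac12\cdot(h/2)^2$, from alternating forms on an $h/2$-dimensional space.

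\textbf{Step 3: all rates.} For rates below $1/4$, use a common-kernel quotient/descent. Take forms vanishing on a common subspace $W$, pass to $\F_2^h/W$, and obtain lists for smaller effective rate while keeping the $h^2/8-O_R(h)$ exponent. For rates above $1/4$, use common-root padding. Multiply all the polynomials $y-f$ by a fixed polynomial with many roots in $\F_{2^h}$, such as a subspace polynomial of a subspace disjoint from the agreement sets. This raises both the degree and the agreement by the same amount $m$. The relative agreement minus the rate then stays at least the original gap times $(1-R)$ for a constant fraction $m/n$, which yields $\varepsilon_R>0$.

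\textbf{Main obstacle.} I expect the main obstacle to be Step 1 together with the degree bookkeeping in Step 2. The selected quadratic-form family must correspond to polynomials $y-f$ with all nonzero non-leading coefficients below degree $k$, for one fixed $y$. Otherwise the construction yields many codewords for different received words, and this is where a naive construction loses the single-list property. I would first try $y=X^{2^{h-1}+2^{h-2}}$-type monomials and verify the vanishing power sums through the 2-adic digit structure of the exponents. The descent step, which must preserve the gap in Step 3, is the secondary risk.
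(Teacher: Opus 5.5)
Your proposal has a genuine gap in Step~1, which is the heart of the argument. Identifying the locator $\prod_{x\in S}(X-x)$ with $y-f$ forces $\deg y=|S|$. Moreover, a received word of degree just above $k$ can never give a fixed gap, since the agreement is at most $\deg(y-f)$.

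More importantly, in characteristic two Newton's identities $je_j=\sum_{i=1}^{j}e_{j-i}p_i$ say nothing about the even-indexed $e_j$. So the BCH parity checks $p_1=\dots=p_{D-2}=0$ kill only the odd-degree coefficients. The locator of a weight-$D$ word is therefore $p(X)^2+\mu X$ with $\deg p=D/2$, and its even coefficients are free all the way up to $X^{D-2}$. It is never $y-f$ with one fixed $y$ and $\deg f<k$ far below $|S|$.

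The missing idea is to take a square root on the support, where $p(\beta)^2=\mu\beta$. Proposition~\ref{prop:bijection} differentiates $bc=X^{2^h}-X$ for a degree-saturating word $c$ to get $b'\equiv1$, hence $\mu=1$ for every word. So every $p$ (of degree $D/2$) agrees with the single received word $\beta\mapsto\beta^{2^{h-1}}$ (of degree $n/2$) exactly on $S$. This factor of two between agreement and message degree is the whole source of the rate-$1/4$, agreement-$1/2$ anchor.

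The same mechanism shows the rank is not a free parameter. Twists $i\ge a$ give the degree bound $2^{h-1}+2^{h-a-1}$, which is saturated by forms of polar rank $2a$. Their count $2^{2a(m-a)+O(h)}$ peaks at $a\approx m/2$, i.e.\ polar rank about $h/2$. Taking rank near full leaves only polynomially many.

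Step~3 also breaks at high rates, because a common padding set cannot stay disjoint from the agreement sets of a large full-field family. For $R>3/4$, $(R-\frac14)n$ roots cannot avoid even one agreement set of size about $n/2$. For the Schmidt family, no agreement set is disjoint from a subspace $T$ of codimension $c'<a$. Disjointness would force $B_\varphi|_{T\times T}=0$, while restricting a polar form of rank $2a$ to $T$ leaves rank at least $2a-2c'>0$.

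With one common padding set chosen by averaging, a seed of rate $r_0$ and agreement $\alpha_0$ only gives gap $\alpha_0(1-R)-r_0(1-\alpha_0)$. For the quarter-rate anchor this is $(3-4R)/8$, not $(\alpha_0-r_0)(1-R)$, and it vanishes at $R=3/4$. The paper therefore uses the descent at high rates too. Descended anchors live on $V=\ell_W(K)$ of density $\lambda$ with all agreements inside $V$. Padding roots in $K\setminus V$ are then all new agreements, giving $R+\lambda/4$ for $R\le1-3\lambda/4$.

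Making this descent work needs three ingredients your sketch leaves out:
\begin{itemize}
\item Affine shifts agreeing with $\varphi$ on $\operatorname{rad}(B_\varphi)$, needed to make each word genuinely $\operatorname{rad}(B_\varphi)$-invariant, since polar-radical directions can still flip $\varphi$.
\item A double count over $c$-dimensional subspaces that finds one $W$ common to many radicals, at cost $2^{-2ac+O_c(1)}$.
\item A realization of the quotient inside $K$ via $c_f=f^{\downarrow}_W\circ\ell_W$, with the additive-domain correspondence on $V$, so that the result is still one list for one received word in the full-field code.
\end{itemize}
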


The formal statement, Theorem~\ref{thm:allrate}, gives
an explicit agreement threshold $\alpha_*(R)>R$.
For any fixed $0<\varepsilon_R<\alpha_*(R)-R$, it yields
Theorem~\ref{thm:informal} at agreement $R+\varepsilon_R$.

The theorem resolves this gap in the negative, at \emph{every}
fixed rate and for every sufficiently large binary extension degree.
Unlike the Hamming-ball obstruction, these lists persist at a
fixed positive agreement gap. Together with the prime-field
results~\cite{BCPZZ,Jeronimo}, this shows that polynomial list
size for ordinary RS codes cannot be characterized by rate
and asymptotic alphabet-size scaling alone: field arithmetic and the evaluation
set matter. It also redirects the questions ahead: which
field families and evaluation sets admit polynomial lists
near capacity, how robust are the counterexamples under
puncturing, and how close can their agreement approach the
Johnson bound?

\paragraph{Landscape.}
Table~\ref{tab:radius} summarizes the parts of the list-decoding landscape most relevant to the field-uniform question. Figure~\ref{fig:bounds} shows the quantitative agreement envelope of our binary-extension construction.
\begin{table}[H]\centering
\small
\renewcommand{\arraystretch}{1.16}
\begin{tabularx}{\linewidth}{@{}>{\raggedright\arraybackslash}p{1.57in}>{\raggedright\arraybackslash}p{1.30in}>{\raggedright\arraybackslash}p{1.34in}>{\raggedright\arraybackslash}X@{}}
\toprule
Result & Agreement $\alpha$ & List size & Restrictions \\
\midrule
\multicolumn{4}{@{}l}{\emph{Upper bounds}} \\
Guruswami--Sudan~\cite{GuruswamiSudan}
 & $>\sqrt R$
 & $\le n^{O_{R,\alpha}(1)}$
 & --- \\
Brakensiek et al.~\cite{BCPZZ}
 & $\ge R+\varepsilon$
 & $\le q^{O_{R,\varepsilon}(1)}$
 & Prime fields; $q\ge C_{R,\varepsilon}n$ \\
Jeronimo~\cite{Jeronimo}
 & $\ge R+\varepsilon$
 & $\le n^{O_\varepsilon(1)}$
 & Prime fields; $q\ge n$ \\
\addlinespace[3pt]
\multicolumn{4}{@{}l}{\emph{Lower bounds}} \\
Ben-Sasson et al. (BKR)~\cite{BKR}
 & $2R$
 & $\ge n^{\Omega(1)}$
 & Selected extension fields; $R=2^{-r}$, $r\ge3$ \\
$q$-ary capacity/counting obstruction
 & $k/n+\Theta(1/\log q)$
 & $\ge 2^{\Omega(n)}$
 & $k=\lfloor Rn\rfloor$; $n\le q\le2^{\gamma n}$ \\
This paper
 & $R+\varepsilon_R$
 & $\ge n^{\Omega_R(\log n)}$
 & Binary extensions; full field and primitive \\
\bottomrule
\end{tabularx}
\caption{List-size bounds for ordinary RS codes ($n$: block length;
$q$: field size). Upper bounds hold for every received word;
lower bounds exhibit received words with large lists. The counting
row assumes fixed $R$ and sufficiently small $\gamma>0$.}
\label{tab:radius}
\end{table}

\begin{figure}[tb]
\centering
\includegraphics[width=0.94\linewidth]{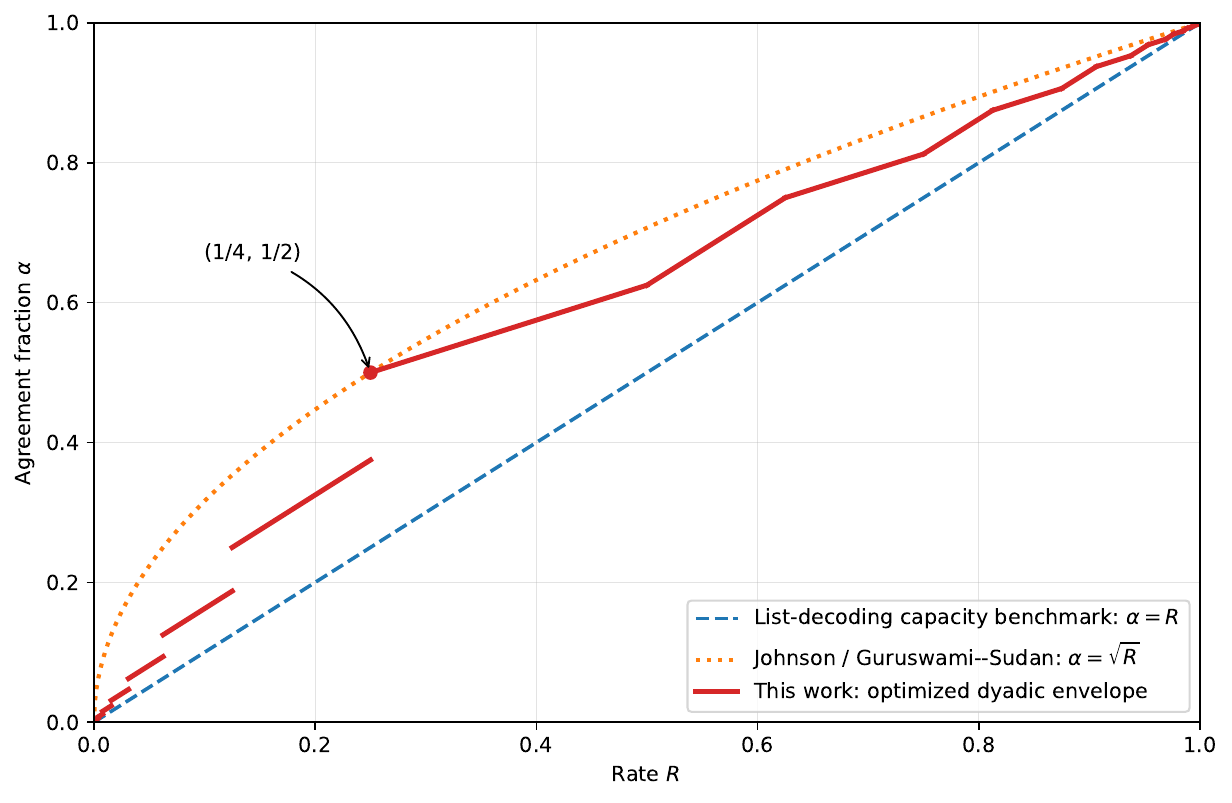}
\caption{Agreement versus rate for ordinary RS codes. The capacity benchmark is $\alpha=R$, while Guruswami--Sudan guarantees polynomial lists above $\alpha=\sqrt R$. The red curve shows the limiting agreement $\alpha=\alpha_*(R)$ achieved by our construction, defined in~\eqref{eq:optimized-agreement-envelope}. For every fixed rate, each fixed agreement strictly between $R$ and this curve has superpolynomial list size for all sufficiently large binary extension degrees. The point $(1/4,1/2)$ comes from the initial full-field construction.}
\label{fig:bounds}
\end{figure}
\FloatBarrier

\subsection{Technical overview: three stages}\label{sec:intro-mechanism}
The proof has three stages. First, a classical BCH weight-enumerator coefficient becomes a large RS list through an exact correspondence. Stage~1 uses this enumerative input only numerically. Second, common-root padding shows how far this single full-field anchor can be moved without using any further structure. Third, we use the translation symmetries of Schmidt's quadratic-form family to obtain lists that remain superpolynomial on smaller additive evaluation sets. Their smaller polynomial degrees leave more room for padding to the target full-field rate.

\paragraph{Stage 1: a BCH weight enumerator becomes an RS list.}
Let $K=\F_{2^h}$. Suppose an $\F_2$-valued polynomial $c(X)$ has even weight $2\le D<2^h$ on $K$ and degree at most $2^h-D$. Its $2^h-D$ zeros saturate the degree bound, so $c(X)$ divides $X^{2^h}-X$. A derivative calculation then forces the complementary factor to have the form
\[
  \frac{X^{2^h}-X}{c(X)}=p(X)^2+X.
\]
The $D$ positions where $c$ equals one are exactly the field elements $\beta$ satisfying $p(\beta)=\beta^{2^{h-1}}$. It follows that the degree-$D/2$ polynomial $p(X)$ has $D$ agreements with the word $\rho(X)=X^{2^{h-1}}$ over $K$. This construction is reversible, so the corresponding BCH weight-enumerator coefficient is exactly the size of one full-field RS list (Proposition~\ref{prop:bijection}).

Schmidt's quadratic-form enumeration yields $n^{\Omega(\log n)}$ such BCH words of weight $(1/2-o(1))2^h$; the classical weight-enumerator literature provides the historical background~\cite{Schmidt2020,Berlekamp}. Under the correspondence, they become a quarter-rate RS list with limiting agreement $1/2$.

\paragraph{Stage 2: padding reveals the need for lower-rate anchors.}
A natural way to raise the rate is to multiply \emph{every} message and their common received word by the same polynomial. Its roots become automatic agreements, so we have not lost the common center or identified any messages. This operation can only raise the message degree. Consequently, the quarter-rate anchor cannot reach any target rate $R<1/4$ by padding alone; that regime simply requires an anchor of lower rate.

The need for lower-rate anchors also appears, less trivially, at high rates. Each added root raises the message degree by one but creates a \emph{new} agreement only if that coordinate was previously a disagreement. Starting from a full-field seed with rate $r_0$ and agreement $\alpha_0$, reaching rate $R$ requires about $(R-r_0)2^h$ padding roots. A uniformly chosen root is a former disagreement with probability $1-\alpha_0$. Averaging over a \emph{single common} padding set retains at least a $1/2^h$ fraction of the original list at every fixed threshold strictly below
\[
  \alpha(R)=\alpha_0+(R-r_0)(1-\alpha_0),\qquad
  \alpha(R)-R=\alpha_0(1-R)-r_0(1-\alpha_0).
\]
For the quarter-rate anchor $(r_0,\alpha_0)=(1/4,1/2)$, the excess is $(3-4R)/8$. Thus padding this anchor reaches every rate $1/4\le R<3/4$, but no longer gives a positive fixed gap once $R\ge3/4$. Extending the construction to both missing rate regimes therefore calls for lower-rate anchors.

\paragraph{Stage 3: quotienting structured BCH families.}
To obtain anchors with smaller full-field degree cost, we return to the BCH family and use more than its weight count. Schmidt's trace-quadratic description yields a superpolynomial subfamily whose members are $W_f$-invariant for large additive subspaces $W_f\le K$, meaning that
\[
   f(\beta+\omega)=f(\beta)
   \qquad(\beta\in K,\ \omega\in W_f).
\]
A double count finds one fixed $c$-dimensional subspace $W$ under which superpolynomially many words are $W$-invariant. The common subspace polynomial gives a descent to an additive evaluation domain $V$ of size $\lambda n$, where $n=2^h$ and $\lambda=2^{-c}$; this is related to classical Kasami--Lin down-conversion~\cite{KasamiLin,BST}. The descended family remains superpolynomial and quarter-rate relative to $V$, but its polynomial degree is only about $\lambda n/4$. Thus its effective starting rate relative to the full-field length $n$ is $\lambda/4$, providing the lower-rate anchors needed for padding. Each root placed outside the smaller evaluation set contributes an additional agreement without overlapping the original agreements.

Choosing a smaller evaluation set of appropriate density
$\lambda$ lets padding produce superpolynomial lists at every fixed
rate, with agreement exceeding the rate by a fixed positive amount.
Optimizing $\lambda$ yields the red curve in
Figure~\ref{fig:bounds}.

\paragraph{Beyond full-field evaluation.}
The obstruction survives controlled dense puncturing;
Section~\ref{sec:consequences} gives the quantitative
bound and its primitive and shortening consequences.

\subsection{Further related work}\label{sec:related}
\paragraph{Classical BCH structure.}
Berlekamp's weight enumerators and Kasami--Lin's minimum-weight
results provide historical context~\cite{Berlekamp,KasamiLin}.
Schmidt's rank-refined enumeration is our quantitative input,
while the polynomial descent is related to modern
down-conversion formulations~\cite{Schmidt2020,BST}.

\paragraph{Further capacity-oriented constructions.}
Recent work refines list-decoding guarantees for folded RS and
multiplicity codes, including improved algorithms and relaxed
generalized Singleton bounds~\cite{ChenZhang,AHS}.
Berman--Shany--Tamo instead construct explicit RS subcodes
that achieve capacity by restricting the message space~\cite{BSTsubcodes}.
Neither addresses the unrestricted message space of the canonical
ordinary RS codes studied here.
For ordinary RS codes, random puncturing provides complementary
capacity results under suitable alphabet-size conditions~\cite{AGLZ};
Section~\ref{sec:consequences} compares them with our dense-puncturing obstruction.

\paragraph{Proximity testing.}
RS proximity underlies FRI and related low-degree tests~\cite{FRI2018,DEEPFRI}. Recent work has clarified that ordinary list decoding, correlated agreement, and proximity-gap statements are related but distinct, and that their behavior can depend on field arithmetic~\cite{BCHKS,CS,KKH}. Our theorem concerns ordinary Hamming list size; its relevance to proximity testing is therefore through that combinatorial obstruction rather than through a direct statement about a particular proof system.

\subsection{Organization}\label{sec:organization}
Section~\ref{sec:prelim} introduces the coding-theoretic notation, the BCH--RS correspondence, and anchor families. Section~\ref{sec:anchor} constructs a full-field anchor family from a BCH minimum-weight multiplicity input. Section~\ref{sec:rate-lifting} develops common-root padding. Section~\ref{sec:downconversion} opens the BCH family structurally, proves common-kernel down-conversion, and derives the all-rate theorem. Section~\ref{sec:consequences} collects nearby RS consequences. Section~\ref{sec:future} records a short list of open directions. Longer counting and consequence proofs are deferred to the appendices.

\section{Preliminaries and the BCH--RS correspondence}\label{sec:prelim}

\subsection{Coding-theoretic notation}
For a given extension degree $h\ge1$, write
\[
 K=\F_{2^h},\qquad K^*=K\setminus\{0\}.
\]
We reserve $X$ for a polynomial indeterminate, use lower-case Latin letters such as $p(X)$ and $c(X)$ for polynomials, and use lower-case Greek letters such as $\beta,\gamma\in K$ for field elements.

For a finite evaluation set $U\subseteq K$ and an integer $1\le k\le |U|$, the Reed--Solomon code of dimension $k$ on $U$ is
\[
 \RS_U(k)
 =\bigl\{(p(\beta))_{\beta\in U}:p\in K[X],\ \deg p<k\bigr\}.
\]
Our main theorem concerns the \emph{full-field} code $\RS_K(k)$, of block length $2^h$, dimension $k$, rate $R=k/2^h$, and minimum Hamming distance $2^h-k+1$. When $U=K^*$ we obtain the \emph{primitive} RS code $\RS_{K^*}(k)$ of length $2^h-1$.

For words $u,v\in K^U$, their Hamming distance is the number of coordinates on which they differ. Equivalently, their \emph{agreement} is the number of coordinates on which they are equal. If $y:U\to K$ is a received word and $p\in K[X]$, write
\[
 \operatorname{agr}_U(p,y)
 =\bigl|\{\beta\in U:p(\beta)=y(\beta)\}\bigr|.
\]
Thus relative distance at most $1-R-\varepsilon$ is the same, up to the usual finite-length rounding, as relative agreement at least $R+\varepsilon$. List decoding asks to bound the number of degree-$<k$ polynomials satisfying such an agreement condition.

For a $\F_2$-valued function $f:U\to\F_2$, write
\[
 \wt_U(f)=|\{\beta\in U:f(\beta)=1\}|.
\]
When $U$ is clear we omit the subscript. Every function on a finite set $U\subseteq K$ has a unique interpolating polynomial of degree $<|U|$; for $\F_2$-valued $f$ we denote this polynomial by $c_f(X)$.

\subsection{The extended binary BCH model}
For even $2\le D<2^h$, define the following evaluated-polynomial realization of the extended primitive binary BCH code:
\begin{equation}\label{eq:ebch-definition}
 \operatorname{eBCH}(2^h,D)
 =\left\{(c(\beta))_{\beta\in K}:
    c\in K[X],\ \deg c\le 2^h-D,\ c(K)\subseteq\F_2
   \right\}.
\end{equation}
This is the convention used, for example, by Berman--Shany--Tamo~\cite[Def.~II.3]{BST}. The code has length $2^h$ over $\F_2$. The integer $D$ is its designed-distance parameter: a nonzero polynomial in~\eqref{eq:ebch-definition} has at most $2^h-D$ zeros, so every nonzero codeword has weight at least $D$. The words of weight exactly $D$ will be counted by the weight-enumerator coefficient below.

Write
\[
 A(2^h,D):=
 \bigl|\{u\in\operatorname{eBCH}(2^h,D):\operatorname{wt}(u)=D\}\bigr|
\]
for the weight-$D$ coefficient of the Hamming weight enumerator of
$\operatorname{eBCH}(2^h,D)$.

\subsection{An exact BCH--RS correspondence}
For even $2\le D<2^h$, define
\[
\mathcal B_K(D)
 =\{c\in K[X]:\deg c\le 2^h-D,\ c(K)\subseteq\F_2,\ \wt_K(c)=D\}
\]
and
\[
\mathcal L_K(D)
 =\left\{p\in K[X]:\deg p\le D/2,  \operatorname{agr}_K(p,\beta\mapsto\beta^{2^{h-1}})\ge D\right\}.
\]

\begin{proposition}[Weight-$D$ BCH words are an exact RS list]\label{prop:bijection}
The maps
\[
 c(X)\longmapsto p(X)
 \quad\text{where}\quad
 \frac{X^{2^h}-X}{c(X)}=p(X)^2+X,
\qquad
 p(X)\longmapsto
 c(X)=\frac{X^{2^h}-X}{p(X)^2+X}
\]
are mutually inverse bijections between $\mathcal B_K(D)$ and
$\mathcal L_K(D)$. In fact, every polynomial in $\mathcal L_K(D)$ has
degree exactly $D/2$ and agreement exactly $D$. Consequently,
\[
 A(2^h,D)=|\mathcal L_K(D)|.
\]
\end{proposition}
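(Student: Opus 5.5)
Both directions rest on the identity $X^{2^h}-X=\prod_{\beta\in K}(X-\beta)$ and on the characteristic-$2$ facts $(p^2)'=0$ and $\bigl(X^{2^h}-X\bigr)'=-1=1$.

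\emph{Forward direction.} Take $c\in\mathcal B_K(D)$. Since $c(K)\subseteq\F_2$ and $\wt_K(c)=D$, the polynomial $c$ vanishes on the set $Z$ of $2^h-D$ field elements where $c=0$. Because $\deg c\le 2^h-D$, $c$ is not zero on all of $K$ (as $D\ge2$), and $c$ equals one at some points, we get
\[
c=\lambda\prod_{\gamma\in Z}(X-\gamma),\qquad \deg c=2^h-D .
\]
Evaluating at any $\beta$ with $c(\beta)=1$ shows that $c$ is determined as a scalar multiple. More to the point, $c$ divides $X^{2^h}-X$. Write $X^{2^h}-X=c\,g$, where $g=\lambda^{-1}\prod_{\beta\in S}(X-\beta)$, $S=K\setminus Z$, and $\deg g=D$.

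Differentiate and evaluate at $\beta\in S$. Since $c(\beta)=1$ and $g(\beta)=0$, the product rule gives $1=g'(\beta)$. At $\beta\in Z$ we instead get $c'(\beta)g(\beta)=1$.

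The key step is to show that $g+X$ is a perfect square, i.e.\ that $g'=1$ identically. The derivative $g'$ has degree $<D$ and equals $1$ on the $D$ points of $S$. Hence $g'-1$, of degree $<D$ with $D$ roots, is zero, so $g'=1$. In characteristic $2$ a polynomial with $(g+X)'=0$ has only even-degree terms. Since $K$ is perfect, every coefficient has a square root, so $g+X=p^2$ with $\deg p=D/2$ and $g=p^2+X$ (signs are irrelevant).

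For the agreement count: for $\beta\in K$ we have $g(\beta)=0$ iff $p(\beta)^2=\beta$. Squaring is a bijection on $K$, so this holds iff $p(\beta)=\beta^{2^{h-1}}$, because $(\beta^{2^{h-1}})^2=\beta^{2^h}=\beta$. Thus the agreement set is exactly $S$, of size $D$, and $p\in\mathcal L_K(D)$.

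\emph{Reverse direction.} Take $p\in\mathcal L_K(D)$ and let $A$ be its agreement set, $|A|\ge D$. Every $\beta\in A$ is a root of $p^2+X$. This polynomial is nonzero, since $\deg p^2$ is even and so cannot cancel the odd-degree term $X$, and it has degree at most $\max(D,1)=D$. Hence it has at most $D$ roots, which forces $|A|=D$, $\deg(p^2+X)=D$, $\deg p=D/2$, and
\[
p^2+X=\mu\prod_{\beta\in A}(X-\beta).
\]
In particular $p^2+X$ divides $X^{2^h}-X$, so $c:=(X^{2^h}-X)/(p^2+X)$ is a polynomial of degree $2^h-D$ vanishing exactly off $A$.

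It remains to see that $c=1$ on $A$. Differentiating $c\cdot(p^2+X)=X^{2^h}-X$ and evaluating at $\beta\in A$ gives $c(\beta)\cdot 1=1$, using $(p^2+X)'=1$. Hence $c(K)\subseteq\F_2$ and $\wt_K(c)=D$, so $c\in\mathcal B_K(D)$.

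\emph{Conclusion.} The two maps are mutually inverse because each is defined by the same identity $c\,(p^2+X)=X^{2^h}-X$. In the forward direction $p$ is determined up to the square-root choice, which is unique in characteristic $2$ over a perfect field. The assignment $c\mapsto(c(\beta))_{\beta\in K}$ is injective on polynomials of degree $<2^h$, so $|\mathcal B_K(D)|=A(2^h,D)$, which gives the final equality.

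The main obstacle is the forward step showing that the cofactor $g$ satisfies $g'\equiv1$. It is handled by the derivative identity at the $D$ support points combined with the degree bound $\deg g'<D$.
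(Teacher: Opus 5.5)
Your proposal is correct and follows essentially the same route as the paper. The tight degree bound forces $c\mid X^{2^h}-X$; differentiating the factorization gives $b'=1$ on the $D$ support points, hence $b'\equiv1$ and $b=p^2+X$; conversely, the same derivative identity forces $c=1$ exactly on the roots of $p^2+X$. The paper leaves implicit three details you spell out: that $p^2+X\neq0$, that the forward agreement set is exactly the support, and that evaluation is injective in degree $<2^h$, which gives $|\mathcal B_K(D)|=A(2^h,D)$.
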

\begin{proof}
Take $c\in\mathcal B_K(D)$. It vanishes at exactly $2^h-D$ elements of $K$, so the degree bound is tight: $\deg c=2^h-D$. Hence $c$ splits into distinct linear factors over $K$, all among the factors of
\[
 X^{2^h}-X=\prod_{\beta\in K}(X-\beta),
\]
and therefore $c\mid X^{2^h}-X$. Write
\[
 b(X)=\frac{X^{2^h}-X}{c(X)}.
\]
Then $\deg b=D$, and its roots are exactly the $D$ points where $c(\beta)=1$.

Differentiating
\[
 b(X)c(X)=X^{2^h}-X
\]
gives, in characteristic two,
\[
 b'(X)c(X)+b(X)c'(X)=1.
\]
At every root $\beta$ of $b$ we have $c(\beta)=1$, hence $b'(\beta)=1$. Since $b'(X)-1$ has degree at most $D-1$ and at least $D$ roots, $b'(X)=1$ identically.

Write $b(X)=\sum_j b_jX^j$. In characteristic two, differentiation kills the even powers, so $b'(X)=1$ says that the only odd-degree term of $b$ is $X$. Thus $b(X)-X$ contains only even powers. Because squaring is an automorphism of the finite field $K$, its coefficients have unique square roots, and there is a unique $p\in K[X]$ such that
\[
 b(X)=p(X)^2+X.
\]
Since $D$ is even, $\deg p=D/2$. At every root $\beta$ of $b$ we have $p(\beta)^2=\beta$, hence $p(\beta)=\beta^{2^{h-1}}$. Thus $p\in\mathcal L_K(D)$.

Conversely, let $p\in\mathcal L_K(D)$ and set $b=p^2+X$. Then $b'=1$, while $\deg b\le D$ and the agreement condition gives at least $D$ distinct roots in $K$. Hence $\deg b=D$ and these are exactly its roots. Therefore $b\mid X^{2^h}-X$, and
\[
 c(X)=\frac{X^{2^h}-X}{b(X)}
\]
has degree $2^h-D$. At a nonroot of $b$ we have $c(\beta)=0$; at a root, the differentiated identity above gives $c(\beta)=1$ because $b'(\beta)=1$. Hence $c\in\mathcal B_K(D)$. The two constructions are inverse by definition.
\end{proof}

\paragraph{Historical locator-polynomial connection.}
The complementary factor $b(X)=p(X)^2+X$ is a
support-locator polynomial, normalized by $b'\equiv1$. Earlier, Augot, Charpin, and Sendrier studied
locator polynomials of minimum-weight BCH codewords
using symmetric functions and Newton identities~\cite{ACS}.
Here the characteristic-two identity $b'\equiv1$
follows directly from the binary values and the
full-field normalization.

\begin{remark}[Where characteristic two enters]\label{rem:binary-obstruction}
Since $c$ is $\F_2$-valued, it equals one
at every root of the complementary factor $b$.
Differentiating $bc=X^{2^h}-X$ therefore gives
$b'(\beta)=1$ at all its $D$ roots, and the
degree bound implies $b'\equiv1$. Over odd
characteristic, nonzero BCH symbols need not agree,
so the analogous derivative values
$-c(\beta)^{-1}$ need not be constant.
This is the obstruction to this particular
BCH--RS correspondence in odd characteristic.
\end{remark}

\subsection{Subspace polynomials and additive domains}
Let $V\le K$ be a binary linear subspace. Its \emph{subspace polynomial} is
\[
 \ell_V(X)=\prod_{\nu\in V}(X-\nu).
\]
The subspace polynomial $\ell_V$ is \emph{linearized}: it has the
form $\sum_i a_iX^{2^i}$~\cite{BKR}. Hence evaluation of $\ell_V$
is an $\F_2$-linear map $K\to K$ whose kernel is exactly $V$.
The polynomial is monic of degree $|V|$, and its derivative is a
nonzero constant. It follows that $\ell_V(K)$ is an additive subspace. Moreover, every element of this image has exactly $|V|$ preimages in $K$: if $\ell_V(x)=\ell_V(y)$, then $x-y\in V$, so the preimages of any image point form one coset of $V$.

The full-field correspondence has the following additive-domain version.

\begin{proposition}[Additive-domain BCH--RS correspondence]\label{prop:additive}
Let $V\le K$ be an additive subspace and put
$n_0:=|V|$ and $\nu_V:=\ell_V'\in K^*$. For even $2\le D<n_0$, there is a bijection
between
\[
 \left\{g\in K[X]:
   \deg g\le n_0-D,\ g(V)\subseteq\F_2,\ \wt_V(g)=D
 \right\}
\]
and the polynomials $p\in K[X]$ of degree at most $D/2$ satisfying
\[
 \operatorname{agr}_V(p,\beta\mapsto\beta^{2^{h-1}})\ge D.
\]
The maps are characterized by
\[
 \frac{\ell_V(X)}{g(X)}
   =\nu_V\bigl(p(X)^2+X\bigr),
 \qquad
 g(X)=\frac{\ell_V(X)}
 {\nu_V\bigl(p(X)^2+X\bigr)}.
\]
As in Proposition~\ref{prop:bijection}, the degree and agreement
inequalities are equalities.
\end{proposition}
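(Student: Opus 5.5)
The plan is to repeat the proof of Proposition~\ref{prop:bijection} almost verbatim, with $X^{2^h}-X$ replaced by the subspace polynomial $\ell_V$. The two facts used there were that $X^{2^h}-X$ has derivative $1$ and splits into distinct linear factors over the evaluation set. Here $\ell_V=\prod_{\nu\in V}(X-\nu)$ splits over $V$ with distinct roots, and its derivative is the nonzero constant $\nu_V$ because $\ell_V$ is linearized. So the only change is that the normalization constant $\nu_V$ appears in place of $1$.

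\emph{Forward direction.} Start with $g$ of degree at most $n_0-D$, $\F_2$-valued on $V$, of $V$-weight $D$.
\begin{enumerate}[label=(\roman*)]
\item $g$ has exactly $n_0-D$ zeros in $V$, so $\deg g=n_0-D$ and $g=\prod(X-\nu)$ over those zeros, up to its leading coefficient. In particular $g\mid\ell_V$.
\item Set $b=\ell_V/g$. It has degree $D$, and its roots are exactly the $D$ points of $V$ where $g=1$.
\item Differentiate $bg=\ell_V$ to get $b'g+bg'=\nu_V$. At each root $\beta$ of $b$ we have $g(\beta)=1$, so $b'(\beta)=\nu_V$.
\item Since $\deg(b'-\nu_V)\le D-1$ and this polynomial has $D$ roots, $b'\equiv\nu_V$.
\item In characteristic two, $b'$ collects exactly the odd-degree terms of $b$. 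Hence $b-\nu_VX$ has only even powers, and $b=\nu_V(p^2+X)$ for a unique $p$, by taking square roots of the coefficients of $\nu_V^{-1}b$ (squaring is bijective on $K$).
\item Then $\deg p=D/2$, since $D$ is even. At each root $\beta$ of $b$, $p(\beta)^2=\beta$, so $p(\beta)=\beta^{2^{h-1}}$, giving $D$ agreements on $V$.
\end{enumerate}

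\emph{Converse.} Given $p$ with $\deg p\le D/2$ and at least $D$ agreements on $V$, set $b=\nu_V(p^2+X)$.
\begin{enumerate}[label=(\roman*)]
\item $b$ is nonzero with $b'=\nu_V$, degree at most $D$, and at least $D$ distinct roots in $V$. So $\deg b=D$ exactly, $\deg p=D/2$, and the agreement set is exactly the root set of $b$.
\item Since these roots are distinct elements of $V$, $b$ divides $\ell_V$ up to scalar. The quotient $g=\ell_V/b$ has degree $n_0-D$.
\item $g$ vanishes on the points of $V$ that are not roots of $b$.
\item At a root $\beta$ of $b$, the identity $b'g+bg'=\nu_V$ gives $\nu_V\,g(\beta)=\nu_V$, so $g(\beta)=1$.
\item Hence $g$ is $\F_2$-valued on $V$ with weight exactly $D$.
\end{enumerate}

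The two constructions are visibly inverse, since each is defined by the same factorization $\ell_V=g\cdot\nu_V(p^2+X)$.

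The main point needing care is the scalar bookkeeping. The derivative of $\ell_V$ is $\nu_V$, not $1$, so the normalization has to be $b=\nu_V(p^2+X)$ rather than $b=p^2+X$. Extracting $p$ then requires dividing out $\nu_V$ before taking square roots. I will also check that $\nu_V\neq0$, which holds because $\ell_V$ is separable.

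Everything else, namely degree tightness, the derivative rigidity argument, and the agreement count, transfers directly from Proposition~\ref{prop:bijection}.
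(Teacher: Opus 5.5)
Your proposal is correct and follows essentially the same route as the paper: you transplant the proof of Proposition~\ref{prop:bijection} with $X^{2^h}-X$ replaced by $\ell_V$, then use $b'g+bg'=\nu_V$ and the derivative-rigidity step to obtain $b=\nu_V(p^2+X)$ in both directions. Your version also spells out scalar bookkeeping that the paper leaves implicit, namely dividing by $\nu_V$ before extracting square roots and checking that $\nu_V\neq0$.
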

\begin{proof}
As in Proposition~\ref{prop:bijection}, the $n_0-D$
zeros of $g$ saturate the degree bound, so
$g\mid\ell_V$ and $b:=\ell_V/g$ has degree $D$,
with roots exactly where $g=1$. Differentiating
$bg=\ell_V$ at those roots gives $b'(\beta)=\nu_V$.
Because $\deg b'<D$, we have $b'\equiv\nu_V$,
and hence, in characteristic two, uniquely
\[
 b=\nu_V(p^2+X),\qquad \deg p=D/2.
\]
The $D$ roots give the required agreements with
$\beta^{2^{h-1}}$.

Conversely, such a polynomial $p$ yields
$b=\nu_V(p^2+X)$ of degree $D$ with $D$ distinct
roots in $V$ and derivative $\nu_V$. Thus
$b\mid\ell_V$; for $g=\ell_V/b$, the identity
$b'g+bg'=\nu_V$ shows that $g=1$ at those
roots and $g=0$ elsewhere on $V$.
This recovers the prescribed binary weight
and degree, and the constructions are inverse.
\end{proof}

\subsection{Anchor families}
When the binary weight approaches half the evaluation-set size
from below, the correspondence produces messages of degree below
one quarter of that size, with agreement approaching one half. We call such lists \emph{anchor
families}; the padding argument will use these properties without
referring to their BCH origin.

\begin{definition}[Anchor family]\label{def:anchor-family}
Fix an integer $c\ge0$. For each $h>c$, let
$V_h\le(\F_{2^h},+)$ be an additive subspace of
codimension $c$, and put $n_0:=|V_h|=2^{h-c}$.
A sequence $(\mathcal A_h)$ of nonempty sets
$\mathcal A_h\subseteq\F_{2^h}[X]$ is an
\emph{anchor family of codimension $c$}
(or \emph{density $\lambda=2^{-c}$}) if every
$p\in\mathcal A_h$ has degree $<n_0/4$ and
\[
 \min_{p\in\mathcal A_h}
 \frac{\operatorname{agr}_{V_h}(p,X^{2^{h-1}})}{n_0}
 \ge\tfrac12-o(1)\qquad(h\to\infty).
\]
\end{definition}

By Proposition~\ref{prop:additive}, $\F_2$-valued
polynomials of even weight
$D_h=(1/2-o(1))n_0<n_0/2$ and degree at most
$n_0-D_h$ give an anchor family: the corresponding
messages have degree $D_h/2<n_0/4$ and exactly
$D_h$ agreements.

\section{Stage 1: a full-field anchor family from BCH multiplicity}\label{sec:anchor}
By Proposition~\ref{prop:bijection}, the number of
minimum-weight words in a suitable extended binary BCH code
is also the size of an RS list. We use the following
multiplicity bound, proved in Appendix~\ref{app:bch-multiplicity}
using Schmidt's quadratic-form enumeration~\cite{Schmidt2020}.
Berlekamp's weight-enumerator work provides historical
context~\cite{Berlekamp}.

\begin{proposition}[BCH minimum-weight multiplicity]\label{prop:classical-bch}
For all sufficiently large $h$, there is an even integer
$D_h^\star<2^{h-1}$ such that
\[
 \frac{D_h^\star}{2^h}=\frac12-o(1),\qquad
 A(2^h,D_h^\star)\ge2^{h^2/8-O(h)}.
\]
\end{proposition}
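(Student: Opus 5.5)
The plan is to realize the weight-$D$ words of $\operatorname{eBCH}(2^h,D)$ as trace functions of quadratic forms and count them by rank. By the classical trace description of the dual of the extended BCH code, every function $f:K\to\F_2$ of the form
\[
 f(\beta)=\Tr\Bigl(\sum_{j=0}^{t} a_j\beta^{2^j+1}\Bigr)+\Tr(a\beta)+e
\]
is a binary quadratic form (plus affine part). Its interpolating polynomial $c_f$ has degree at most $2^h-1-(2^{h-1}-2^{h-1-t})$ or so, because the $2$-weights of the exponents are at most $2$. More precisely, I would compute the exact maximal degree of $c_f$ by writing out the trace as $\sum_i \beta^{2^i(2^j+1)}$ and taking the largest exponent modulo $2^h-1$. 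The exponents $2^{i}+2^{i+j}$ with $j\le t$ have maximum $2^{h-1}+2^{h-1-t}$ or $2^{h-1}+2^{h-2}$, depending on the orientation. I would then choose $t\approx h/2$ so that these lie below $2^h-D$ for $D=2^{h-1}-2^{h-1-s}$. Thus $c_f$ lies in the code of \eqref{eq:ebch-definition} once $D\le 2^h-\deg c_f$.

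Next I would use the weight of a quadratic form. If $Q$ has rank $2m$ (its symplectic radical has codimension $2m$), then $\Tr(Q)+$ an affine term $+e$ has weight in $\{2^{h-1},\,2^{h-1}\pm2^{h-1-m}\}$. The minimum value $2^{h-1}-2^{h-1-m}$ is attained for exactly half-type choices of the affine part. Setting $D_h^\star=2^{h-1}-2^{h-1-m}$ with $m\to\infty$ slowly, or even $m$ fixed but large, gives $D/2^h=\tfrac12-o(1)$, $D$ even, and $D<2^{h-1}$. Here I need the degree constraint to be compatible: the weight $2^{h-1}-2^{h-1-m}$ must be at least the designed distance, which forces the number of quadratic terms to be $t$ with $2^{h-1-t}$ comparable to $2^{h-1-m}$. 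This suggests taking $m$ near the maximal rank allowed by $t$ terms, i.e.\ $m\approx h/2$ up to the constraint, or equivalently choosing $t$ so that the designed distance equals $2^{h-1}-2^{h-1-\lfloor h/2\rfloor+\text{const}}$ and counting forms of one fixed corank.

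The counting step uses Schmidt's enumeration of the number of $(a_0,\dots,a_t)\in K^{t+1}$ whose quadratic form has a given rank. With $t+1\approx h/4$ or so free coefficients, there are $2^{h(t+1)}$ forms. A positive fraction up to $2^{-O(h)}$, or a Gaussian-binomial-type proportion, has rank exactly the prescribed value. Multiplying by the roughly $2^{h}$ admissible affine shifts giving minimum weight yields $2^{h^2/8-O(h)}$ once $t+1\approx h/8$, matching the target exponent. Distinct $(Q,\text{affine},e)$ give distinct functions, so these are distinct codewords, and Proposition~\ref{prop:bijection} is not needed here beyond the definition of $A$.

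The main obstacle is the bookkeeping that makes the three parameters consistent: the degree of $c_f$ (from the largest cyclotomic exponent of $\beta^{2^j+1}$ terms), the designed distance $D$, and the rank stratum achieving weight exactly $D$. I would first fix the number of quadratic terms $t$, read off $\deg c_f\le 2^h-D$ from the $2$-adic exponent bound, and then invoke Schmidt's rank distribution for the stratum of maximal corank allowed, whose weight equals $D$ exactly. I would check that this stratum still has $2^{h^2/8-O(h)}$ members via the $q$-binomial asymptotics $\gbinom{h}{r}{2}\approx2^{r(h-r)}$.
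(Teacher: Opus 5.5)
Your overall architecture (trace quadratic forms, minimum-weight affine shifts, a binary-exponent degree check, Schmidt's rank count) matches the paper's. However, the family you write down fails the degree requirement. For $\Tr(uX^{2^j+1})$ the reduced exponents are the cyclic shifts of the binary positions $\{0,j\}$, so the largest is $2^{h-1}+2^{h-1-\min(j,h-j)}$; there is no ``orientation'' choice. Any family containing $j=1$ therefore has canonical degree $2^{h-1}+2^{h-2}$ whenever $u_1\ne0$. This includes your range $0\le j\le t$ for every $t\ge1$, since only the $j=1$ term produces the exponent with binary positions $\{h-2,h-1\}$. Such a word lies in $\operatorname{eBCH}(2^h,D)$ only if $D\le2^{h-2}$. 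A weight-$D$ codeword must have degree exactly $2^h-D$, so the rank stratum alone does not control membership, and taking $t\approx h/2$ cannot give $D/2^h\to\tfrac12$. For $D=2^{h-1}-2^{h-1-a}$ you need the opposite restriction: only indices $a\le i\le m=\lfloor h/2\rfloor$ may occur. That is the paper's family $\mathcal F_{h,a}$, with the relative-trace term $\Tr_{\F_{2^m}/\F_2}(u_mX^{2^m+1})$ when $h=2m$. This lower cutoff is what ties your three parameters together. It gives $\deg c_f\le2^{h-1}+2^{h-1-a}=2^h-D_{h,a}$ for every affine shift. It is also for this additive family, not the low-index range, that Schmidt's minimum-rank and rank-distribution theorems are available. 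The forms of polar rank exactly $2a$ then have minimum-weight shifts of weight exactly $D_{h,a}$.

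Your counting heuristic is also wrong, so even with the right family your parameter choices would not give $h^2/8$. The minimal-rank stratum is not a $2^{-O(h)}$ fraction. The family $\mathcal F_{h,a}$ has $2^{h(m-a)+O(h)}$ members, while Schmidt's count of the selected class is $(2^h-1)\gbinom{m}{a}{4}$ or $(2^a-1)\gbinom{m}{a}{4}$, i.e.\ $2^{2a(m-a)+O(h)}$. This is a base-$4$ Gaussian binomial in $m$, not $\gbinom{h}{r}{2}$. The exponent $h^2/8$ comes from maximizing $2a(m-a)$ at $a=\lceil m/2\rceil\approx h/4$, i.e.\ with about $h/4$ quadratic coefficients. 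There the selected class is only a $2^{-h^2/8+O(h)}$ fraction of the family. Writing $a$ for your half-rank parameter (your $m$), the alternatives you list all fall short:
\begin{itemize}
\item about $h/8$ coefficients means $a\approx3h/8$ and only $2^{3h^2/32+O(h)}$ words;
\item fixed $a$ gives $2^{O(h)}$ words and $D/2^h=\tfrac12-2^{-a-1}$, not $\tfrac12-o(1)$;
\item slowly growing $a$ gives exponent $o(h^2)$;
\item $a$ near $h/2$ gives $2^{O(h)}$.
\end{itemize}
Finally, each form of polar rank $2a$ has exactly $2^{2a}$ minimum-weight affine shifts: the linear $L$ agreeing with $\varphi$ on $\operatorname{rad}(B_\varphi)$, each with one constant. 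That is not about $2^h$, nor half of all shifts, though this discrepancy only affects the $O(h)$ term.
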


\begin{theorem}[Full-field anchor family]\label{thm:anchor}
There is an anchor family $(\mathcal A_h)$ of codimension
$0$, with $V_h=\F_{2^h}$, such that for all sufficiently
large $h$,
\[
 |\mathcal A_h|\ge2^{h^2/8-O(h)}.
\]
Thus the full-field code $\RS_{\F_{2^h}}(2^{h-2})$
has a superpolynomial list at rate $1/4$
and limiting agreement $1/2$.
\end{theorem}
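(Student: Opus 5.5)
The plan is to combine Proposition~\ref{prop:classical-bch} with the exact correspondence of Proposition~\ref{prop:bijection}. For each sufficiently large $h$, take the even integer $D_h^\star<2^{h-1}$ supplied by Proposition~\ref{prop:classical-bch}. Define the family by
\[
 \mathcal A_h:=\mathcal L_K(D_h^\star),\qquad K=\F_{2^h},\ V_h=K.
\]
Proposition~\ref{prop:bijection} then gives $|\mathcal A_h|=A(2^h,D_h^\star)\ge2^{h^2/8-O(h)}$. In particular the family is nonempty and superpolynomial in $n=2^h$.

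The next step is to check the two conditions of Definition~\ref{def:anchor-family} with codimension $c=0$ and $n_0=2^h$.

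\emph{Degree.} By Proposition~\ref{prop:bijection}, every $p\in\mathcal A_h$ has degree exactly $D_h^\star/2$. Since $D_h^\star<2^{h-1}$, this degree is $<2^{h-2}=n_0/4$. So every such $p$ is a message of $\RS_K(2^{h-2})$, and the code has rate exactly $1/4$.

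\emph{Agreement.} Every $p\in\mathcal A_h$ has agreement exactly $D_h^\star$ with $\beta\mapsto\beta^{2^{h-1}}$. The minimum relative agreement is therefore $D_h^\star/2^h=\tfrac12-o(1)$, again by Proposition~\ref{prop:classical-bch}.

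For small $h$, where Proposition~\ref{prop:classical-bch} does not apply, the definition only requires nonempty sets. Any degree-$0$ polynomial (or any fixed choice) can serve there, since the definition is asymptotic.

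The final sentence of the theorem then follows. The single received word $\rho(\beta)=\beta^{2^{h-1}}$ has all of $\mathcal A_h$ within relative distance $\tfrac12+o(1)$ in $\RS_{\F_{2^h}}(2^{h-2})$.

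This argument has no real obstacle. The substantive input is the multiplicity bound of Proposition~\ref{prop:classical-bch}, which is assumed here (its proof via Schmidt's enumeration is deferred to the appendix). The only points needing care are the strict inequality $D_h^\star<2^{h-1}$, which gives degree strictly below $n_0/4$, and the fact that agreements are counted exactly rather than only bounded below.
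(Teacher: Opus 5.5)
Your proposal is correct and matches the paper's proof: apply Proposition~\ref{prop:bijection} to the weight-$D_h^\star$ words supplied by Proposition~\ref{prop:classical-bch}, then read off degree exactly $D_h^\star/2<2^{h-2}$ and agreement exactly $D_h^\star=(1/2-o(1))2^h$. Your remark that small $h$ can be filled with arbitrary nonempty sets is a harmless detail that the paper leaves implicit.
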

\begin{proof}
Apply Proposition~\ref{prop:bijection} to the
$A(2^h,D_h^\star)$ weight-$D_h^\star$ BCH words supplied by
Proposition~\ref{prop:classical-bch}. This gives distinct messages
of degree $D_h^\star/2<2^{h-2}$, all agreeing with the received function
$\beta\mapsto\beta^{2^{h-1}}$ at exactly
$D_h^\star=(1/2-o(1))2^h$ points. Together they form
the claimed anchor family.
\end{proof}

\section{Stage 2: common-root padding}\label{sec:rate-lifting}
We show how common-root padding lifts an anchor family
of fixed codimension $c$ to a full-field RS code.
The resulting agreement depends on its density
$\lambda=2^{-c}$ and the target rate $R$.

Fix an anchor family $(\mathcal A_h)$ of codimension $c$.
For a given $h\ge c+2$, write $K=\F_{2^h}$,
$n=|K|$, $V=V_h$, and $n_0=|V|=\lambda n$.
Define the received function on the full field by
\[
 \rho(\beta)=\beta^{2^{h-1}}\qquad(\beta\in K).
\]
For $T\subseteq K$, write
$z_T(X)=\prod_{\zeta\in T}(X-\zeta)$.
Every anchor message is a polynomial of degree $<n_0/4$,
representing a word in $\RS_V(n_0/4)$; our lower bound on
$h$ ensures that $n_0/4$ is an integer. To obtain
messages of degree below $k=\lfloor Rn\rfloor$, we multiply
by a polynomial with
\[
 b:=k-\frac{n_0}{4}
\]
distinct roots. For any $T\subseteq K$ with $|T|=b$,
the padded message $z_Tp$ has degree $<k$.
The constraint $\lambda\le4R$ ensures $b\ge0$
for all sufficiently large $h$. When $\lambda=4R$,
$b=0$ exactly.

A padding root creates an automatic agreement after
multiplying both the messages and $\rho$ by $z_T$. If the root lies outside $V$,
no original agreement is lost. If it lies inside $V$,
an agreement may already have been present. The
switch occurs when the $b$ roots no longer fit into
$K\setminus V$. In the limit, the condition is
\[
 R-\frac{\lambda}{4}\le1-\lambda
 \quad\Longleftrightarrow\quad
 R\le1-\frac{3\lambda}{4}.
\]

\begin{proposition}[Padding an anchor family]
\label{prop:asymptotic-padding}
Let $(\mathcal A_h)$ be an anchor family of fixed
codimension $c$ and density $\lambda=2^{-c}$,
and write $M_h=|\mathcal A_h|$.
For any fixed $0<R<1$ with $\lambda\le4R$, define
\begin{equation}\label{eq:padding-agreement}
 \alpha_\lambda(R):=
 \begin{cases}
 R+\dfrac{\lambda}{4},
    &R\le1-\dfrac{3\lambda}{4}
      \quad\text{(disjoint)},\\[4pt]
 \dfrac{1+R}{2}-\dfrac{\lambda}{8},
    &R\ge1-\dfrac{3\lambda}{4}
      \quad\text{(overlapping)}.
 \end{cases}
\end{equation}
The expressions agree at the switching rate.
For every fixed $0<\alpha<\alpha_\lambda(R)$,
all sufficiently large $h$ admit at least $M_h/n$
distinct padded messages of degree $<\lfloor Rn\rfloor$
with agreement fraction at least $\alpha$ against
a common received word on $K$.
\end{proposition}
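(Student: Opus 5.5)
We construct the padded list directly. Fix $h$ large and put $b=k-n_0/4$, where $k=\lfloor Rn\rfloor$. For $T\subseteq K$ with $|T|=b$, the received word on $K$ is
\[
 y_T(\beta)=z_T(\beta)\rho(\beta),
\]
and each $p\in\mathcal A_h$ is mapped to $z_Tp$. The degree bound $\deg(z_Tp)<b+n_0/4=k$ is immediate. The map $p\mapsto z_Tp$ is injective, since $K[X]$ is an integral domain. For the agreements, at $\beta\in T$ both sides vanish. At $\beta\notin T$ we have $z_T(\beta)\ne0$, so $z_T(\beta)p(\beta)=z_T(\beta)\rho(\beta)$ holds iff $p(\beta)=\rho(\beta)$. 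Hence
\[
 \operatorname{agr}_K(z_Tp,y_T)=|T|+\bigl|\{\beta\in V\setminus T:p(\beta)=\rho(\beta)\}\bigr|+\bigl|\{\beta\in K\setminus(V\cup T):p(\beta)=\rho(\beta)\}\bigr|.
\]
The last term is nonnegative and we discard it. We only need to handle the possible loss $|T\cap S_p|$, where $S_p=\{\beta\in V:p(\beta)=\rho(\beta)\}$ satisfies $|S_p|\ge(1/2-o(1))n_0$. This gives
\[
 \operatorname{agr}\ge b+|S_p|-|T\cap S_p|.
\]

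In the disjoint case, $b\le n-n_0$ asymptotically. We choose $T\subseteq K\setminus V$ deterministically, so every $p$ keeps all of its agreements and
\[
 \operatorname{agr}/n\ge (R-\lambda/4)-o(1)+\lambda(1/2-o(1))=R+\lambda/4-o(1).
\]
No loss occurs here; we even keep all $M_h$ messages. At the boundary, and whenever $b$ might exceed $n-n_0$ by $o(n)$ because of rounding, we fall into the second argument. Since the target $\alpha$ is strictly below $\alpha_\lambda(R)$, this slack is harmless.

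In the overlapping case, $b>n-n_0$. We take $T=(K\setminus V)\cup T'$ with $T'\subseteq V$ and $|T'|=b'=b-(n-n_0)$. Now
\[
 \operatorname{agr}\ge b+|S_p|-|T'\cap S_p|.
\]
We choose $T'$ uniformly at random among $b'$-subsets of $V$. Then
\[
 \mathbb E|T'\cap S_p|=b'|S_p|/n_0\approx b'/2,
\]
so the expected agreement is at least $b+(n_0-b')/2-o(n)$. Substituting $b=Rn-\lambda n/4$ and $b'=b-n+n_0$ gives
\[
 b+(n_0-b')/2=\frac{(1+R)n}{2}-\frac{\lambda n}{8}=\alpha_\lambda(R)\,n.
\]
This matches \eqref{eq:padding-agreement}, and the two formulas coincide at $b'=0$, i.e. at $R=1-3\lambda/4$.

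The main obstacle is converting this expectation into a count of at least $M_h/n$ messages with agreement at least $\alpha n$ for a single common $T$. We do this in two steps. First, $|T'\cap S_p|$ is hypergeometric, so by a Chernoff/Hoeffding bound for sampling without replacement,
\[
 \Pr\bigl[|T'\cap S_p|>b'|S_p|/n_0+\delta n\bigr]\le e^{-2\delta^2 n}<1-1/n
\]
for the fixed gap $\delta=(\alpha_\lambda(R)-\alpha)/2$ and large $h$. Even Markov's inequality applied to $|T'\cap S_p|$ suffices in the same way, since any success probability bounded below by $1/n$ is enough. Second, we use linearity over $p$: the expected number of $p\in\mathcal A_h$ whose padded agreement is at least $\alpha n$ is at least $(1-e^{-2\delta^2n})M_h\ge M_h/n$. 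Hence some fixed $T'$, and so some fixed received word $y_T$, achieves at least $M_h/n$ such messages. These messages are distinct by the injectivity noted above.
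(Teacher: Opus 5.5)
Your proof is correct. Its construction matches the paper's: the same padded center $z_T\rho$, the same disjoint choice $T\subseteq K\setminus V$, the same overlapping choice $T=(K\setminus V)\cup T'$ with $T'$ uniform in $V$ and common to all anchors, and the same expected agreement $b+(n-b)(\tfrac12-o(1))=n\mu_h$.

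You differ only in the step you flagged as the main obstacle. The paper never bounds individual tails. It fixes one $T$ on which the \emph{average} agreement over all $M_h$ padded messages is at least $n\mu_h$. It then uses only the trivial cap $\operatorname{agr}\le n$ (a reverse-Markov count) to find $G\ge M_h/n$ messages with agreement at least $\lfloor n\mu_h\rfloor$. That needs no concentration and no slack beyond the vanishing $o(1)$ terms and rounding, and it is exactly where the factor $1/n$ in the statement comes from.

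You instead bound each message's failure probability and sum over messages. The bound comes from Hoeffding for the hypergeometric overlap $|T'\cap S_p|$, or just from Markov, which gives failure probability at most $1-\Omega(\alpha_\lambda(R)-\alpha)$. This uses the fixed gap $\alpha_\lambda(R)-\alpha$ essentially. In return it retains a $1-e^{-\Omega(n)}$ fraction of the anchors (a constant fraction with Markov) instead of a $1/n$ fraction. For the main theorem the two are interchangeable, since a factor $n=2^h$ does not affect the $h^2/8$ exponent.

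Two small points:
\begin{itemize}
\item Your rounding hedge at the switching rate is unnecessary. When $R<1-3\lambda/4$, $b<n-n_0$ for every $h$. When $R=1-3\lambda/4$, $b=n-n_0$ exactly, since $Rn=n-3n_0/4$ is an integer. When $R>1-3\lambda/4$, $b>n-n_0$ for large $h$.
\item $\mathbb E|T'\cap S_p|$ need not be $\approx b'/2$, because anchors may have more than $n_0/2$ agreements. Your lower bound still holds, since $|S_p|(1-b'/n_0)$ is increasing in $|S_p|$.
\end{itemize}
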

\begin{proof}
By the anchor-family definition, we may take
$\eta_h\to0$ such that each $p\in\mathcal A_h$
agrees with $\rho$ at no fewer than
$(1/2-\eta_h)n_0$ points of $V$.
Let $k=\lfloor Rn\rfloor$ and $b=k-n_0/4$.
For each padding set $T$ of size $b$,
use $z_T\rho$ as the received word
and $z_Tp$ as the padded messages.
Their distinctness is preserved by multiplication
by the same nonzero polynomial $z_T$.

\emph{Disjoint padding.}
If $b\le n-n_0$, choose $T\subseteq K\setminus V$.
All original anchor agreements survive and the
$b$ roots give additional agreements. Every padded
message therefore agrees at least
\[
 b+(1/2-\eta_h)n_0
\]
times. Dividing by $n$ gives the limiting fraction
\[
 \left(R-\frac{\lambda}{4}\right)
       +\frac{\lambda}{2}
 =R+\frac{\lambda}{4}.
\]
All $M_h$ messages are retained.

\emph{Overlapping padding.}
If $b>n-n_0$, use every point outside $V$ as a
padding root and choose the remaining roots
uniformly from $V$. Equivalently, the $n-b$
unpadded coordinates form a uniformly random
subset of $V$. A uniformly selected unpadded
coordinate is an agreement for a fixed anchor
$p$ with probability
$\operatorname{agr}_V(p,\rho)/n_0\ge1/2-\eta_h$.
Including the $b$ automatic agreements,
the expected number of agreements for each
padded message is at least
\[
 n\mu_h,\qquad
 \mu_h:=\frac bn+
          \left(1-\frac bn\right)
          \left(\frac12-\eta_h\right).
\]
Since $b/n\to R-\lambda/4$, we obtain
\[
 \mu_h\longrightarrow
 \frac{1+R}{2}-\frac{\lambda}{8}.
\]

We choose the \emph{same} random padding set for
all anchors. Averaging shows that some set $T$
gives an average of at least $n\mu_h$ agreements
across the $M_h$ messages. Put
$A=\lfloor n\mu_h\rfloor$. If $G$ messages have at least
$A$ agreements, the rest have at most $A-1$, so
\[
 M_hn\mu_h\le(M_h-G)(A-1)+Gn,
 \qquad
 G\ge\frac{M_h(n\mu_h-A+1)}{n-A+1}
 \ge\frac{M_h}{n}.
\]

At the switching rate $R=1-3\lambda/4$, we have
$b=n-n_0$ exactly, so the disjoint construction
uses every point outside $V$.
The rounding loss $1/n$ in the overlapping case
vanishes, proving the assertion for every fixed
$\alpha<\alpha_\lambda(R)$.
\end{proof}

\begin{corollary}[Padding the full-field anchor family]
\label{cor:anchor-lifting}
For every fixed $1/4\le R<3/4$ and every fixed
$R<\alpha<3/8+R/2$, all sufficiently large $h$
admit a received word for the full-field RS code
over $\F_{2^h}$ of dimension $\lfloor R2^h\rfloor$
with at least $2^{h^2/8-O_{R,\alpha}(h)}$ codewords
at agreement fraction at least $\alpha$.
\end{corollary}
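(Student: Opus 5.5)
This corollary should follow directly from Proposition~\ref{prop:asymptotic-padding} applied to the full-field anchor family of Theorem~\ref{thm:anchor}. That family has codimension $c=0$, so $\lambda=1$ and $V_h=K$, and it satisfies $M_h=|\mathcal A_h|\ge 2^{h^2/8-O(h)}$.

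First check the hypotheses. The constraint $\lambda\le 4R$ becomes $R\ge 1/4$, which holds by assumption. Next identify which case of~\eqref{eq:padding-agreement} applies. With $\lambda=1$ the switching rate is $1-3/4=1/4$, so every $R\ge1/4$ lies in the overlapping case. There we get
\[
 \alpha_1(R)=\frac{1+R}{2}-\frac18=\frac38+\frac R2.
\]
At $R=1/4$ the two cases coincide, since the disjoint value $R+1/4=1/2$ matches, and $b=0$ there, so no padding is needed at all. The interval $R<\alpha<3/8+R/2$ is nonempty exactly when $R<3/4$, which explains the stated range.

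Now fix $\alpha$ in this interval. Proposition~\ref{prop:asymptotic-padding} then gives, for all sufficiently large $h$, a common received word $z_T\rho$ on $K$ and at least $M_h/n$ distinct padded messages $z_Tp$, each of degree $<\lfloor Rn\rfloor$ and agreement fraction at least $\alpha$. Distinct messages give distinct codewords, because evaluation on all of $K$ is injective for degree $<n$. So the list consists of at least
\[
 \frac{M_h}{n}\ge 2^{h^2/8-O(h)-h}=2^{h^2/8-O(h)}
\]
codewords of $\RS_K(\lfloor R2^h\rfloor)$. The dependence of the threshold ``sufficiently large $h$'' on $R$ and $\alpha$ can be absorbed into the $O_{R,\alpha}(h)$ term.

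There is no real obstacle here; the substantive work is already in the proposition. The only care needed is bookkeeping: confirming that the overlapping regime is the relevant one for $\lambda=1$, and noting that losing a factor $n=2^h$ changes only the linear term in the exponent.
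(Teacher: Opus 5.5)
Your proposal is correct and follows essentially the same route as the paper. You apply Proposition~\ref{prop:asymptotic-padding} to the codimension-$0$ anchor family of Theorem~\ref{thm:anchor} with $\lambda=1$, read off $\alpha_1(R)=3/8+R/2$ from the overlapping branch (the two branches coincide at $R=1/4$, where $b=0$), and note that losing a factor $n=2^h$ changes only the linear term in the exponent.
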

\begin{proof}
Theorem~\ref{thm:anchor} gives anchor families
of codimension $0$ and size $2^{h^2/8-O(h)}$,
so $\lambda=1$. Padding is feasible at every
$R\ge1/4$, including equality. The overlapping
branch of~\eqref{eq:padding-agreement}
(or either branch at $R=1/4$) gives
\[
 \alpha_1(R)=\frac{1+R}{2}-\frac18
             =\frac38+\frac R2.
\]
This exceeds $R$ for $R<3/4$. The loss of
at most a factor $n=2^h$ preserves the
leading $h^2/8$ list-size exponent.
\end{proof}

\section{Stage 3: anchor families on smaller additive domains}\label{sec:downconversion}
Throughout this section, take $h\ge3$, write $K=\F_{2^h}$
and $n=2^h$, and regard $K$ as an $h$-dimensional vector
space over $\F_2$.

To apply padding at every rate, we construct large anchor
families of fixed codimension $c$. Schmidt's quadratic-form
families provide many binary words with large subspaces of
invariant directions. We select a common subspace $W$ for
many such words and use polynomial descent and the additive-domain
BCH--RS correspondence to obtain the anchor family.

\subsection{Quadratic forms and translation invariance}\label{subsec:quadratic-basics}
This subsection is a gentle introduction to quadratic forms in
characteristic two, focusing on the properties needed below.
Our goal is to find many functions that are constant on the cosets
of a common additive subspace, so that they can descend to a smaller
evaluation domain. We first name this property.

\begin{definition}[$W$-invariant function]\label{def:W-invariant}
Let $W\le(K,+)$ be an additive subspace. A function
$f:K\to K$ is \emph{$W$-invariant} if
\[
 f(x+\omega)=f(x)\qquad
 \text{for every }x\in K\text{ and }\omega\in W.
\]
Equivalently, $f$ is constant on each coset of $W$.
This terminology also applies to $\F_2$-valued
functions on $K$.
\end{definition}

Quadratic forms provide a way to identify such invariant directions.

\begin{definition}[Quadratic form and polar form]\label{def:quadratic-polar}
A map $\varphi:K\to\F_2$ is a \emph{quadratic form} if
$\varphi(0)=0$ and
\[
 B_\varphi(x,y):=\varphi(x+y)+\varphi(x)+\varphi(y)
\]
is bilinear over $\F_2$. The bilinear form $B_\varphi$ is the
\emph{polar form} of $\varphi$. Since $B_\varphi(x,x)=0$ for every $x$,
it is alternating.
\end{definition}

The polar form measures the effect of translation. Indeed, for
every $x,\omega\in K$,
\begin{equation}\label{eq:quadratic-translation}
 \varphi(x+\omega)+\varphi(x)
 =\varphi(\omega)+B_\varphi(x,\omega).
\end{equation}
The change is independent of the base point $x$ precisely when
$B_\varphi(x,\omega)=0$ for all $x$. This motivates the
polar radical.

\begin{definition}[Polar radical and polar rank]\label{def:radical-rank}
The \emph{polar radical} of $\varphi$ is
\[
 \operatorname{rad}(B_\varphi)
 :=\{\omega\in K:B_\varphi(x,\omega)=0
       \text{ for every }x\in K\}.
\]
By bilinearity, $\operatorname{rad}(B_\varphi)$ is a linear subspace of $K$.
The \emph{polar rank} is
\[
 \operatorname{rank}(B_\varphi)
 :=h-\dim_{\F_2}\operatorname{rad}(B_\varphi).
\]
The polar form is \emph{nondegenerate} when $\operatorname{rad}(B_\varphi)=\{0\}$:
then no nonzero direction has zero polar interaction with every other
direction.
\end{definition}

For $\omega\in\operatorname{rad}(B_\varphi)$, the translation
identity reduces to
\[
 \varphi(x+\omega)+\varphi(x)=\varphi(\omega)
 \qquad\text{for every }x.
\]
Such directions change $\varphi$ by a constant, but they need not
leave it unchanged: invariance requires $\varphi(\omega)=0$ as well.

To see that the invariant directions form a subspace, observe that
for $u,v\in\operatorname{rad}(B_\varphi)$ we have
$B_\varphi(u,v)=0$, hence
$\varphi(u+v)=\varphi(u)+\varphi(v)$.
Thus $\varphi$ restricts to an $\F_2$-linear functional on its
polar radical, whose kernel is the desired subspace.

\begin{definition}[Quadratic radical and quadratic rank]\label{def:quadratic-rank}
The \emph{quadratic radical} of $\varphi$ is
\[
 \operatorname{rad}(\varphi)
 :=\{\omega\in\operatorname{rad}(B_\varphi):\varphi(\omega)=0\}
 =\ker\bigl(\varphi|_{\operatorname{rad}(B_\varphi)}\bigr).
\]
The \emph{quadratic rank} is
\[
 \operatorname{rank}(\varphi)
 :=h-\dim_{\F_2}\operatorname{rad}(\varphi).
\]
\end{definition}

By the translation identity, the quadratic radical is precisely
the space of invariant directions:
\[
 \varphi\text{ is }W\text{-invariant}
 \quad\Longleftrightarrow\quad
 W\subseteq\operatorname{rad}(\varphi).
\]
Since the restriction of $\varphi$ to its polar radical is a
linear functional, its kernel has codimension zero or one.
Therefore
\begin{equation}\label{eq:quadratic-polar-rank}
 \begin{aligned}
 \operatorname{rank}(\varphi)-\operatorname{rank}(B_\varphi)
 &=\dim_{\F_2}\operatorname{rad}(B_\varphi)
   -\dim_{\F_2}\operatorname{rad}(\varphi)\\
 &\in\{0,1\}.
 \end{aligned}
\end{equation}
The quadratic and polar ranks coincide exactly when
$\varphi$ vanishes on its polar radical.
In particular, a nonzero linear functional has quadratic rank $1$.

\begin{example}[Polar and quadratic radicals]\label{ex:binary-quadratic}
On $U=\F_2^3$, consider
\[
 \varphi_0(x_1,x_2,x_3)=x_1x_2,
 \qquad
 \varphi_1(x_1,x_2,x_3)=x_1x_2+x_3.
\]
They have the same polar form
$B(x,y)=x_1y_2+x_2y_1$ and hence the same polar radical
\[
 \operatorname{rad}(B_{\varphi_0})
 =\operatorname{rad}(B_{\varphi_1})
 =\langle(0,0,1)\rangle.
\]
Their quadratic radicals differ:
\[
 \operatorname{rad}(\varphi_0)=\langle(0,0,1)\rangle,
 \qquad
 \operatorname{rad}(\varphi_1)=\{0\}.
\]
Thus both have polar rank $2$, but their quadratic ranks are
$2$ and $3$, respectively; their Hamming weights are $2$ and $4$.
Translation by $(0,0,1)$ fixes $\varphi_0$ but flips
$\varphi_1$. Adding the linear function $x_3$ to $\varphi_1$
recovers $\varphi_0$ and restores invariance.
\end{example}

The comparison of polar and quadratic ranks becomes useful
through the following structural property.

\begin{proposition}[Evenness of polar rank]\label{prop:even-polar-rank}
Every quadratic form $\varphi:K\to\F_2$ has even polar rank.
\end{proposition}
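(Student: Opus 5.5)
The plan is to reduce the statement to the classical fact that an alternating bilinear form on a finite-dimensional vector space has even rank, and to prove that fact by an explicit symplectic (hyperbolic) splitting. Throughout, regard $K$ as the $h$-dimensional $\F_2$-space $U$. By Definition~\ref{def:quadratic-polar}, $B:=B_\varphi$ is bilinear and satisfies $B(x,x)=0$. It is also symmetric, since $\varphi(x+y)=\varphi(y+x)$; in characteristic two symmetric and alternating coincide up to this diagonal condition, and the diagonal condition is what will be used.

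\emph{Step 1: pass to a nondegenerate form.} Let $N=\operatorname{rad}(B)$. The form $B$ descends to a well-defined bilinear form $\bar B$ on $\bar U=U/N$. It is still alternating, and it is nondegenerate: if $\bar B(\bar x,\cdot)\equiv0$, then $x\in N$, so $\bar x=0$. Since $\dim\bar U=h-\dim N=\operatorname{rank}(B_\varphi)$, it suffices to show that a finite-dimensional space carrying a nondegenerate alternating form has even dimension.

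\emph{Step 2: induction by splitting off hyperbolic planes.} Induct on $m=\dim\bar U$; the case $m=0$ is trivial. If $m>0$, pick $e\neq0$. Nondegeneracy gives some $f$ with $\bar B(e,f)=1$. Then $f\notin\langle e\rangle$, because $\bar B(e,e)=0$, and this is exactly where the alternating property enters. So $P=\langle e,f\rangle$ is $2$-dimensional, and $\bar B|_P$ has Gram matrix $\begin{pmatrix}0&1\\1&0\end{pmatrix}$, which is nondegenerate.

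Now set $P^\perp=\{x:\bar B(x,e)=\bar B(x,f)=0\}$. It is the kernel of the map $x\mapsto(\bar B(x,e),\bar B(x,f))\in\F_2^2$, so $\dim P^\perp\ge m-2$. Moreover $P\cap P^\perp=0$, by nondegeneracy of $\bar B|_P$. Together these give $\bar U=P\oplus P^\perp$. Explicitly, $x-\bar B(x,f)e-\bar B(x,e)f\in P^\perp$; checking this uses symmetry, which holds in characteristic two.

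The restriction of $\bar B$ to $P^\perp$ is again alternating. It is also nondegenerate: an element of $P^\perp$ orthogonal to all of $P^\perp$ is orthogonal to $P$ as well, hence to all of $\bar U$, hence zero. By induction $\dim P^\perp$ is even, so $m=2+\dim P^\perp$ is even.

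\emph{Main point of care.} The only step needing attention is Step 2's use of $B(x,x)=0$ to ensure $e$ and $f$ are independent. This is why the polar form of a quadratic form, rather than an arbitrary symmetric form, is required: for example, $x_1y_1$ on $\F_2$ is symmetric but has rank $1$. Everything else is routine linear algebra.
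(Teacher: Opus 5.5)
Your proof is correct and takes essentially the same route as the paper's appendix proof. Both split off a hyperbolic plane $\langle e,f\rangle$ with $B(e,f)=1$, use the same projection $x\mapsto x+\bar B(x,f)e+\bar B(x,e)f$ to obtain the orthogonal decomposition, and iterate. The only difference is cosmetic: you first quotient by the radical to reduce to a nondegenerate form and induct on dimension, whereas the paper tracks $\operatorname{rank}B=2+\operatorname{rank}(B|_{W^\perp})$ directly.
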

\begin{proof}
The alternating-form argument is given in
Appendix~\ref{app:quadratic-even-rank}.
\end{proof}

Together with~\eqref{eq:quadratic-polar-rank}, this means that
the parity of quadratic rank records exactly whether $\varphi$
vanishes on its polar radical: the quadratic rank is even when
it does, and odd otherwise. This distinction governs the
Hamming weight of $\varphi$. In particular, odd quadratic rank
forces a balanced word, while even quadratic rank yields a
deviation from half the field size, whose magnitude is determined
by the quadratic rank but whose sign is not.

The following character-sum calculation makes this precise.

\begin{proposition}[Quadratic rank and Hamming weight]\label{prop:quadratic-character-sum}
Let $\varphi:K\to\F_2$ be a quadratic form of quadratic rank $r$.
Then
\[
 \wt_K(\varphi)=
 \begin{cases}
  2^{h-1}\pm2^{h-a-1},&r=2a,\\
  2^{h-1},&r=2a+1.
 \end{cases}
\]
For even $r$, the sign depends on the quadratic form,
not just its quadratic rank.
\end{proposition}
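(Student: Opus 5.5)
We compute the weight through the character sum
\[
 S(\varphi):=\sum_{x\in K}(-1)^{\varphi(x)}=2^h-2\,\wt_K(\varphi),
\]
so it suffices to show that $S(\varphi)=\pm2^{h-a}$ when $r=2a$ and $S(\varphi)=0$ when $r=2a+1$.

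The key device is squaring. Expand
\[
 S(\varphi)^2=\sum_{x,y}(-1)^{\varphi(x)+\varphi(y)}.
\]
Substitute $y=x+\omega$ and use the translation identity~\eqref{eq:quadratic-translation}, which gives $\varphi(x)+\varphi(x+\omega)=\varphi(\omega)+B_\varphi(x,\omega)$. Then
\[
 S(\varphi)^2=\sum_{\omega\in K}(-1)^{\varphi(\omega)}\sum_{x\in K}(-1)^{B_\varphi(x,\omega)}.
\]
For fixed $\omega$, the map $x\mapsto B_\varphi(x,\omega)$ is an $\F_2$-linear functional. The inner sum is therefore $2^h$ if this functional is zero, i.e.\ if $\omega\in\operatorname{rad}(B_\varphi)$, and $0$ otherwise. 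Hence
\[
 S(\varphi)^2=2^h\sum_{\omega\in\operatorname{rad}(B_\varphi)}(-1)^{\varphi(\omega)}.
\]

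The remaining sum is handled by the linearity observed before Definition~\ref{def:quadratic-rank}: $\varphi$ restricts to a linear functional on $\operatorname{rad}(B_\varphi)$.

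\emph{Odd rank.} If this functional is nonzero, the sum vanishes, so $S(\varphi)=0$ and $\wt_K(\varphi)=2^{h-1}$. By~\eqref{eq:quadratic-polar-rank} this is exactly the case $\operatorname{rank}(\varphi)=\operatorname{rank}(B_\varphi)+1$. Proposition~\ref{prop:even-polar-rank} says the polar rank is even, so here $r$ is odd.

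\emph{Even rank.} Otherwise $\varphi$ vanishes on $\operatorname{rad}(B_\varphi)$. Then the polar and quadratic radicals coincide, the rank $r=2a$ is even, and
\[
 S(\varphi)^2=2^h\cdot2^{h-r}=2^{2h-2a},
\]
so $S(\varphi)=\pm2^{h-a}$. This gives $\wt_K(\varphi)=2^{h-1}\mp2^{h-a-1}$.

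The parity dichotomy is what makes the two cases exhaustive. The case split above is governed by whether $\varphi$ vanishes on the polar radical, not directly by the parity of $r$. Evenness of the polar rank (Proposition~\ref{prop:even-polar-rank}) together with~\eqref{eq:quadratic-polar-rank} converts one criterion into the other: $r$ is odd exactly when $\varphi|_{\operatorname{rad}(B_\varphi)}\neq0$, and even exactly when it vanishes. So odd $r$ falls into the balanced case and even $r$ into the $\pm$ case.

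To justify that the sign genuinely depends on the form, it suffices to give two forms of the same even rank with opposite signs. On $\F_2^2$, the form $x_1x_2$ has weight $1$ and the form $x_1x_2+x_1+x_2$ has weight $3$. Both have quadratic rank $2$. To realize this inside $K$, compose each form with a linear projection $K\to\F_2^2$; this keeps the quadratic rank at $2$ and multiplies both weights by $2^{h-2}$, so the signs remain opposite.

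I expect no real obstacle. The only point needing care is the rank bookkeeping that converts the radical-vanishing dichotomy into the parity of $r$, which rests on the two cited facts.
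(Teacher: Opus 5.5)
Your proof is correct and follows the paper's argument essentially step for step. Both square the character sum, change variables so the inner sum collapses to $\operatorname{rad}(B_\varphi)$, and use linearity of $\varphi$ there to split into the balanced case and the $S^2=2^{2h-2a}$ case, with the parity bookkeeping via Proposition~\ref{prop:even-polar-rank} and \eqref{eq:quadratic-polar-rank} handled as the paper does just before the statement. Your explicit pair $x_1x_2$ versus $x_1x_2+x_1+x_2$, pulled back along a surjective linear map $K\to\F_2^2$, is a small addition: it justifies the sign remark, which the paper asserts without proof.
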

\begin{proof}
Put $S=\sum_{x\in K}(-1)^{\varphi(x)}$. Expanding $S^2$ and
using the bijection $(x,y)\mapsto(x,x+y)$ gives
\[
 \begin{aligned}
 S^2
 &=\sum_{x,y\in K}(-1)^{\varphi(x)+\varphi(y)}\\
 &=\sum_{x,z\in K}(-1)^{\varphi(x)+\varphi(x+z)}\\
 &=\sum_{z\in K}(-1)^{\varphi(z)}
     \sum_{x\in K}(-1)^{B_\varphi(x,z)}\\
 &=2^h\sum_{z\in \operatorname{rad}(B_\varphi)}(-1)^{\varphi(z)}.
 \end{aligned}
\]
The last equality evaluates the inner sum over $x$ in the third
line. For $z\notin \operatorname{rad}(B_\varphi)$, the map $x\mapsto B_\varphi(x,z)$
is a nonzero binary linear functional, so its character sum is
zero; for $z\in \operatorname{rad}(B_\varphi)$, that inner sum is $2^h$.

It remains to evaluate the sum over $z\in \operatorname{rad}(B_\varphi)$ in the last
line. As observed above,
$\varphi|_{\operatorname{rad}(B_\varphi)}$ is linear. If it is nonzero, this sum vanishes
and $S=0$. Otherwise it equals $|\operatorname{rad}(B_\varphi)|=2^{h-2a}$, since
the quadratic rank is then the polar rank $2a$. Hence
$S^2=2^{2h-2a}$. The identity
$\wt_K(\varphi)=(2^h-S)/2$ now gives the stated weights.
\end{proof}

\subsection{Schmidt's trace-quadratic family}\label{subsec:schmidt-family}

We now need many quadratic forms with controlled polar rank and
canonical polynomial degree. Schmidt's trace-quadratic families
provide both features.

Put $m=\lfloor h/2\rfloor$ and fix $1\le a\le m$. Throughout the
construction, write $\delta:=2^{h-a-1}$ and
$D_{h,a}:=2^{h-1}-\delta$ for the full-field BCH weight.
Here $\Tr_{E/\F_2}$ denotes the field trace of a finite
extension $E/\F_2$. We use Schmidt's binary trace-quadratic
family $\mathcal F_{h,a}$:
\[
\begin{aligned}
 h=2m+1:\quad \varphi(X)&=\Tr_{K/\F_2}\!\left(\sum_{i=a}^{m}u_i X^{2^i+1}\right),\\
 h=2m:\quad \varphi(X)&=\sum_{i=a}^{m-1}\Tr_{K/\F_2}\!\left(u_i X^{2^i+1}\right)\\
 &\quad+\Tr_{\F_{2^m}/\F_2}\!\left(u_m X^{2^m+1}\right).
\end{aligned}
\]
The coefficients $u_i$
range independently over the indicated fields. For odd $h$, all $u_i$
belong to $K$; for even $h$, $u_a,\ldots,u_{m-1}\in K$ and
$u_m\in\F_{2^m}$. Each displayed function is quadratic: its
polarization replaces $X^{2^i+1}$ by the bilinear expression
$x^{2^i}y+xy^{2^i}$ inside the trace.
We use two properties of $\mathcal F_{h,a}$:
it is \emph{additive}, as is immediate from the coefficient
description, and every nonzero difference of two forms
has quadratic rank at least $2a+1$ for odd $h$ and at
least $2a$ for even $h$~\cite[Theorems~5.4 and~5.6]{Schmidt2020}.

An affine linear shift can make a quadratic form invariant
under its entire polar radical. We count the shifts
that also attain the required BCH weight.

\begin{lemma}[Affine shifts adapted to a polar radical]\label{lem:radical-shifts}
Let $\varphi:K\to\F_2$ be a quadratic form of polar rank $2a$. Among all affine shifts $L+b$, with $L:K\to\F_2$
linear and $b\in\F_2$, exactly $2^{2a}$ choices make
$f=\varphi+L+b$ $\operatorname{rad}(B_\varphi)$-invariant and have weight $D_{h,a}$.
For each selected linear
part $L$, the quadratic radical $\operatorname{rad}(\varphi+L)$ equals $\operatorname{rad}(B_\varphi)$.
\end{lemma}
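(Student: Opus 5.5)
Write $P:=\operatorname{rad}(B_\varphi)$, so $\dim P=h-2a$. The plan is to count pairs $(L,b)$ directly, using the translation identity~\eqref{eq:quadratic-translation} for the invariance condition and Proposition~\ref{prop:quadratic-character-sum} for the weight condition.

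\emph{Step 1: the polar form is unchanged.} Adding an affine function $L+b$ does not change the polar form. Indeed,
\[
 (\varphi+L)(x+y)+(\varphi+L)(x)+(\varphi+L)(y)=B_\varphi(x,y),
\]
since $L$ is linear. So $\psi:=\varphi+L$ is a quadratic form with $B_\psi=B_\varphi$ and polar radical $P$. The constant $b$ does not affect invariance, because $f(x+\omega)+f(x)=\psi(x+\omega)+\psi(x)$.

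\emph{Step 2: the invariance condition.} By the characterization following Definition~\ref{def:quadratic-rank}, $f$ is $P$-invariant if and only if $P\subseteq\operatorname{rad}(\psi)$. Since $\operatorname{rad}(\psi)\subseteq\operatorname{rad}(B_\psi)=P$, this holds if and only if $\operatorname{rad}(\psi)=P$, which is exactly the final claim of the lemma. Equivalently, the linear functional $\psi|_P=\varphi|_P+L|_P$ must vanish, i.e.\ $L|_P=\varphi|_P$. Here $\varphi|_P$ is linear, as noted before Definition~\ref{def:quadratic-rank}. Linear functionals $L$ on $K$ with a prescribed restriction to $P$ form a coset of the annihilator of $P$. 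That annihilator has size $2^{h-\dim P}=2^{2a}$, so exactly $2^{2a}$ linear parts $L$ qualify.

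\emph{Step 3: the weight condition.} For each qualifying $L$, the form $\psi$ has quadratic rank $2a$. Proposition~\ref{prop:quadratic-character-sum} then gives $\wt_K(\psi)=2^{h-1}\pm\delta$. Adding $b=1$ complements the word, so $\wt_K(\psi+1)=2^{h-1}\mp\delta$. Since $\delta\ne0$, exactly one of $b\in\{0,1\}$ gives weight $D_{h,a}=2^{h-1}-\delta$. Complementing preserves invariance, so this choice of $b$ remains invariant.

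Conversely, non-qualifying $L$ give non-invariant $f$ and are excluded regardless of $b$. The total count is therefore $2^{2a}\cdot1=2^{2a}$.

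The main point needing care is Step 3's sign: for a fixed $L$, exactly one $b$ must work. This holds because $\delta=2^{h-a-1}\ge1$ is nonzero, so the two weights $2^{h-1}\pm\delta$ are distinct; no step is a serious obstacle.
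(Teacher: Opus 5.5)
Your proposal is correct and follows essentially the same route as the paper: invariance reduces to $L|_{\operatorname{rad}(B_\varphi)}=\varphi|_{\operatorname{rad}(B_\varphi)}$, which gives $2^{2a}$ linear parts, and then Proposition~\ref{prop:quadratic-character-sum} together with complementation by $b$ leaves exactly one constant for each such $L$. The only cosmetic difference is where the radical equality appears. You obtain $\operatorname{rad}(\varphi+L)=\operatorname{rad}(B_\varphi)$ inside the invariance step, via the criterion $W\subseteq\operatorname{rad}(\psi)$. The paper instead checks necessity by evaluating $f(\omega)+f(0)$ and records the radical equality at the end.
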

\begin{proof}
We first determine which linear parts $L$ make
$f=\varphi+L+b$ $\operatorname{rad}(B_\varphi)$-invariant.
The translation identity shows that
\[
 L|_{\operatorname{rad}(B_\varphi)}=\varphi|_{\operatorname{rad}(B_\varphi)}
\]
is sufficient. It is also necessary: if $f$ is
$\operatorname{rad}(B_\varphi)$-invariant, then for every $\omega\in \operatorname{rad}(B_\varphi)$,
\[
 0=f(\omega)+f(0)=\varphi(\omega)+L(\omega).
\]
As established above, $\varphi$ restricts to a linear functional on
its polar radical. This functional extends to $K$: extend a basis of $\operatorname{rad}(B_\varphi)$ to a basis of $K$ and choose its
values arbitrarily on the added basis vectors. Any two extensions
differ by a linear functional that vanishes on $\operatorname{rad}(B_\varphi)$, equivalently
by a linear functional on the quotient $K/\operatorname{rad}(B_\varphi)$. Since this
quotient has dimension $2a$, there are exactly $2^{2a}$ extensions.

Fix one such extension $L$. The quadratic form $\varphi+L$ has the
same polar rank $2a$ and vanishes on $\operatorname{rad}(B_\varphi)$, so its quadratic
rank is $2a$. Proposition~\ref{prop:quadratic-character-sum}
gives weight $2^{h-1}\pm2^{h-a-1}$. Adding the constant $b$
interchanges zeros and ones, so exactly one choice gives the
smaller weight $D_{h,a}$.

For each selected $L$, the quadratic form $\psi:=\varphi+L$
vanishes on its polar radical $\operatorname{rad}(B_\varphi)$. Therefore
$\operatorname{rad}(\psi)=\operatorname{rad}(B_\varphi)$ by the definition of the quadratic radical.
No other affine shifts satisfy both the invariance and weight requirements.
\end{proof}

To place the selected functions in the extended BCH code, we need
a bound on their canonical polynomial degree as well as their weight.

\begin{lemma}[Degree checkpoint for the trace-quadratic family]\label{lem:trace-degree}
Let $\varphi\in\mathcal F_{h,a}$ and let $f=\varphi+L+b$
for any affine shift $L+b$. If $c_f\in K[X]$ is the canonical
degree-$<2^h$ polynomial representing $f$ on $K$, then
\[
 \deg c_f\le 2^{h-1}+\delta.
\]
If in addition $\wt_K(f)=D_{h,a}$, then
\[
 \deg c_f=2^{h-1}+\delta.
\]
\end{lemma}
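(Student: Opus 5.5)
The plan is to compute the canonical polynomial of $f$ exactly by expanding the traces, and then read off the degree from the binary weights of the exponents that can appear. Every monomial surviving reduction modulo $X^{2^h}-X$ has an exponent in $[0,2^h-1]$. The bound $2^{h-1}+\delta=2^{h-1}+2^{h-1-a}$ will come from the fact that every such exponent has binary weight at most $2$. Moreover, whenever the binary weight is exactly $2$, its two $1$-bits are at cyclic distance at least $a$.

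\emph{Step 1: expansion.} I would write $\Tr_{K/\F_2}(uX^{2^i+1})=\sum_{j=0}^{h-1}u^{2^j}X^{2^j(2^i+1)}$. Modulo $X^{2^h}-X$, the exponent $2^j(2^i+1)$ reduces to $2^{j}+2^{(j+i)\bmod h}$, which lies in $[1,2^h-2]$. This is a weight-two exponent whose bits sit at positions $j$ and $j+i \bmod h$, at cyclic distance $\min(i,h-i)$. Since $a\le i\le m=\lfloor h/2\rfloor$, we have $h-i\ge i\ge a$, so this distance is at least $a$.

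For even $h$, the term $\Tr_{\F_{2^m}/\F_2}(u_mX^{2^m+1})=\sum_{j<m}u_m^{2^j}X^{2^j(2^m+1)}$ gives bit positions $j,j+m$, at distance $m\ge a$. Note that $X^{2^m+1}$ takes values in $\F_{2^m}$, so this is a legitimate polynomial identity on $K$.

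Any linear $L:K\to\F_2$ has the form $\Tr(\gamma X)$, contributing only exponents $2^j$. The constant $b$ contributes exponent $0$. Because distinct terms may combine only by adding coefficients, never by creating new exponents, $c_f$ is supported on these exponents.

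\emph{Step 2: maximize.} Exponents of weight at most one are at most $2^{h-1}$. For a weight-two exponent $2^s+2^t$ with $s>t$: if $s<h-1$, the exponent is below $2^{h-1}$. If $s=h-1$, the cyclic-distance condition forces $h-1-t\ge a$ (the other distance $t+1$ is irrelevant for an upper bound), so $t\le h-1-a$. Hence every exponent is at most $2^{h-1}+2^{h-1-a}=2^{h-1}+\delta$, which gives the first claim.

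\emph{Step 3: equality under the weight condition.} If $\wt_K(f)=D_{h,a}$, then $f\ne0$ and $c_f$ vanishes at exactly $2^h-D_{h,a}=2^{h-1}+\delta$ points of $K$. A nonzero polynomial has at most $\deg c_f$ roots, so $\deg c_f\ge2^{h-1}+\delta$. Combined with Step 2, this gives equality.

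The only delicate point is the bookkeeping in Step 1: checking that the reduction modulo $X^{2^h}-X$ never produces the exponent $2^h-1$, and that the even-$h$ middle term obeys the same distance bound. Both checks are immediate from the explicit exponent forms above.
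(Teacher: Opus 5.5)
Your proposal is correct and follows the same route as the paper. You expand each trace term into monomials whose reduced exponents have two binary $1$-positions at circular separation at least $a$ (separation $m$ for the even-$h$ relative-norm term), note that affine shifts contribute only exponents $2^j$ and $0$, bound every exponent by $2^{h-1}+\delta$, and get equality from the $2^{h-1}+\delta$ zeros via the root bound.

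One minor point: your check that reduction never produces the exponent $2^h-1$ is unnecessary, since the canonical polynomial may have degree up to $2^h-1$ anyway, but it does no harm.
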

\begin{proof}
First consider a quadratic trace term. Expanding gives
\[
 \Tr_{K/\F_2}(uX^{2^i+1})
 =\sum_{j=0}^{h-1}
   u^{2^j}X^{2^{i+j}+2^j},
\]
where exponents are reduced as polynomial functions on $K$. Thus each
resulting exponent has two binary $1$-positions, obtained by cyclically
shifting the positions $0$ and $i$ modulo $h$; their circular
separations are $i$ and $h-i$. For the indices in
$\mathcal F_{h,a}$, both separations are at least $a$. Hence, if the
larger position is $h-1$, the other is at most $h-a-1$, and every such
exponent is at most
\[
 2^{h-1}+2^{h-a-1}=2^{h-1}+\delta.
\]
When $h=2m$, the remaining term
$\Tr_{\F_{2^m}/\F_2}(u_m X^{2^m+1})$ is the trace of
$u_m$ times the relative norm $X^{2^m+1}$ from $K$ to
$\F_{2^m}$. Its expansion has exponents
$2^{m+j}+2^j$ for $0\le j<m$, whose binary positions are
separated by $m$, so the same bound applies.
Finally, every $\F_2$-linear functional on $K$ has the form
$\Tr_{K/\F_2}(vX)$ for some $v\in K$.\footnote{This is the trace description of $\operatorname{RM}(1,h)$~\cite{DGM1970}; alternatively, the $2^h$ distinct maps $x\mapsto\Tr(vx)$ exhaust all $2^h$ binary linear functionals.}
Thus an affine shift $L+b$ has canonical degree at most
$2^{h-1}$ and does not increase the bound.

If $\wt_K(f)=D_{h,a}$, then $c_f$ has
$2^{h-1}+\delta$ zeros in $K$; the root bound forces equality
in the degree bound.
\end{proof}

For a prime power $r$, let
\[
 \gbinom{m}{a}{r}
 :=\prod_{j=0}^{a-1}\frac{r^{m-j}-1}{r^{a-j}-1}
\]
be the Gaussian binomial coefficient, which counts $a$-dimensional
subspaces of $\F_r^m$.

We select classes whose polar rank is exactly $2a$, so that
their polar radicals have the common dimension $h-2a$.
Let $\mathcal Q_{h,a}\subseteq\mathcal F_{h,a}$ denote Schmidt's
class of quadratic rank $2a+1$ when $h=2m+1$, and his elliptic
class of quadratic rank $2a$ when $h=2m$.
Here ``elliptic'' specifies Schmidt's class of quadratic rank $2a$
with weight $2^{h-1}+2^{h-a-1}$.
\begin{proposition}[Schmidt's quadratic-form count]\label{prop:schmidt-form-count}
For $1\le a\le m=\lfloor h/2\rfloor$, the selected class
$\mathcal Q_{h,a}$ has cardinality $C_{h,a}:=|\mathcal Q_{h,a}|$,
where
\begin{equation}\label{eq:Cha}
 C_{h,a}=
 \begin{cases}
  (2^h-1)\gbinom{m}{a}{4},&h=2m+1,\\
  (2^a-1)\gbinom{m}{a}{4},&h=2m.
 \end{cases}
\end{equation}
\end{proposition}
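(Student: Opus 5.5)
This count is imported from Schmidt's work rather than rederived from scratch, so the plan is to identify the selected classes with the rank-distribution results of \cite{Schmidt2020} and check the normalizations. Schmidt treats $\mathcal F_{h,a}$ as an additive code of quadratic forms whose nonzero elements all have large rank. For such codes he computes the full rank distribution, split into elliptic and hyperbolic types in even rank, by an association-scheme or character-sum argument in the style of Delsarte--Goethals. The steps below specialize that distribution to the minimal rank.

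\emph{Step 1: minimal rank and the linear-programming identity.} The minimum quadratic rank in $\mathcal F_{h,a}$ is $2a+1$ when $h=2m+1$ and $2a$ when $h=2m$ \cite[Theorems~5.4 and~5.6]{Schmidt2020}. By Proposition~\ref{prop:quadratic-character-sum}, the value of $\sum_{x}(-1)^{\varphi(x)}$ (up to sign) and the sign type are determined by the rank. Hence the moment identities $\sum_{\varphi\in\mathcal F_{h,a}}\sum_{x}(-1)^{\varphi(x)+\Tr(vx)}$, which vanish or equal $|\mathcal F_{h,a}|$ by additivity and the trace expansion in Lemma~\ref{lem:trace-degree}, become linear equations in the rank counts $N_r$. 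The duality is such that only the ranks $r\ge 2a$ (or $2a+1$) occur. This leaves as many unknowns as equations, so the system is triangular and solvable. Schmidt solves it explicitly. I would quote the solution and read off the extreme coefficient.

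\emph{Step 2: evaluate the minimal-rank coefficient.} For odd $h=2m+1$, the number of forms of rank $2a+1$ equals $(2^h-1)\gbinom{m}{a}{4}$. The Gaussian binomial with base $4$ arises from the $q$-analogue structure with $q=4$, coming from the pairing of Frobenius exponents $2^i+1$. The factor $2^h-1$ records the nonzero scalar multiples. For even $h=2m$, only the elliptic part of rank $2a$ is needed, and its coefficient is $(2^a-1)\gbinom{m}{a}{4}$. The hyperbolic count at the same rank is different and is discarded.

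\emph{Step 3: matching normalizations.} The elliptic class must be the one of weight $2^{h-1}+2^{h-a-1}$, i.e.\ with negative character sum. Schmidt's coefficient field for $u_m$ must also be $\F_{2^m}$, so that the total size $|\mathcal F_{h,a}|$ agrees with ours. I expect this bookkeeping to be the main obstacle: confirming the sign convention for elliptic versus hyperbolic and the exact parametrization. As a sanity check I would compute $h=3,a=1$, where the claim gives $C_{3,1}=7$, and $h=4,a=1$, where it gives $5$. Summing all the $N_r$ must also recover $|\mathcal F_{h,a}|-1$.
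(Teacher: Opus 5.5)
Your route is the paper's: the count is a direct specialization of Schmidt's inner-distribution theorems for the additive family $\mathcal F_{h,a}$, and additivity makes his normalized inner-distribution coefficient equal to the number of individual forms in the class (state this explicitly). The parameter bookkeeping you flag as the main obstacle is what the paper's proof supplies: Theorem~5.4 with $d=2a+1$ feeding Theorem~4.7 at $s=a+1$ for odd $h$, and Theorem~5.6 with $\delta_{\mathrm{Sch}}=a$ feeding Theorem~4.9 at $s=a$, $\tau=-1$ for even $h$.

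Do not lean on your Step~1 sketch, because two of its claims are wrong. At even rank the sign type is not determined by the rank, which is why $\tau=-1$ has to be specified. The linear-character moment identities also do not form a square system in the rank counts. The citation, not that sketch, carries the proof.
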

This specializes Schmidt's quadratic-rank distribution
theorems~\cite{Schmidt2020}; the parameter translation
is given in Appendix~\ref{app:schmidt-specialization}.

Define the \emph{shifted Schmidt family}
\[
 \mathcal S_{h,a}:=
 \left\{\varphi+L+b:
 \begin{array}{l}
 \varphi\in\mathcal Q_{h,a},\quad
 L\in\operatorname{Hom}_{\F_2}(K,\F_2),\quad b\in\F_2,\\
 L|_{\operatorname{rad}(B_\varphi)}=\varphi|_{\operatorname{rad}(B_\varphi)},\quad
 \wt_K(\varphi+L+b)=D_{h,a}
 \end{array}\right\}.
\]
Its size is determined in the following proposition.

\begin{proposition}[Structured Schmidt subfamily]\label{prop:enumeration}
The $C_{h,a}$ forms in $\mathcal Q_{h,a}$ have polar rank
$2a$ and pairwise distinct polar forms. For each
$\varphi\in\mathcal Q_{h,a}$, exactly $2^{2a}$ affine
shifts belong to $\mathcal S_{h,a}$; they are
$\operatorname{rad}(B_\varphi)$-invariant, satisfy $\operatorname{rad}(\varphi+L)=\operatorname{rad}(B_\varphi)$,
and have canonical polynomial degree $2^h-D_{h,a}$.
The shifts associated with different $\varphi$ are distinct.
Consequently,
\[
 |\mathcal S_{h,a}|=2^{2a}C_{h,a}=:M_{h,a},
\]
and their canonical polynomials are distinct
weight-$D_{h,a}$ codewords of
$\operatorname{eBCH}(2^h,D_{h,a})$.
\end{proposition}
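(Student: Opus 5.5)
Most of the proposition follows by assembling earlier lemmas; the substantive work is showing that distinct $\varphi$ give distinct polar forms and distinct shifts. The first step is to pin down the polar rank of each $\varphi\in\mathcal Q_{h,a}$.

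\emph{Polar rank.} For $h=2m+1$, $\varphi$ has quadratic rank $2a+1$, which is odd. By \eqref{eq:quadratic-polar-rank} and Proposition~\ref{prop:even-polar-rank}, the polar rank must be the even number $2a$. For $h=2m$, $\varphi$ has quadratic rank $2a$ (even) with weight $2^{h-1}+2^{h-a-1}$, which differs from $2^{h-1}$. Proposition~\ref{prop:quadratic-character-sum} then shows $\varphi$ vanishes on its polar radical, so the polar rank equals the quadratic rank, namely $2a$.

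\emph{Distinct polar forms.} Take $\varphi\neq\varphi'$ in $\mathcal Q_{h,a}\subseteq\mathcal F_{h,a}$. By additivity, $\varphi+\varphi'\in\mathcal F_{h,a}$, and it is nonzero. By Schmidt's rank bound, it has quadratic rank at least $2a\ge2$. If $B_\varphi=B_{\varphi'}$, then $\varphi+\varphi'$ would have zero polar form, so it would be linear and have quadratic rank at most $1$. This contradiction shows the polar forms are pairwise distinct. In this step I must take care that a nonzero form in $\mathcal F_{h,a}$ is not a linear functional; the rank bound $\ge 2a\ge2$ handles this directly.

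\emph{Per-$\varphi$ properties.} Lemma~\ref{lem:radical-shifts} gives, for each $\varphi$, exactly $2^{2a}$ admissible shifts. These are $\operatorname{rad}(B_\varphi)$-invariant and satisfy $\operatorname{rad}(\varphi+L)=\operatorname{rad}(B_\varphi)$. The defining conditions of $\mathcal S_{h,a}$ are exactly that lemma's conditions, so the count is exact. Lemma~\ref{lem:trace-degree} gives canonical degree $2^{h-1}+\delta=2^h-D_{h,a}$. Hence each $c_f$ is $\F_2$-valued on $K$, has degree at most $2^h-D_{h,a}$ and weight $D_{h,a}$, and so is a weight-$D_{h,a}$ word of $\operatorname{eBCH}(2^h,D_{h,a})$. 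Here $D_{h,a}$ is even for $a\le m<h-1$, and $2\le D_{h,a}<2^h$.

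\emph{Distinctness across $\varphi$.} Suppose $\varphi+L+b=\varphi'+L'+b'$ as functions on $K$. Then the two sides have the same polarization. Affine terms contribute nothing to the polar form, so $B_\varphi=B_{\varphi'}$, and hence $\varphi=\varphi'$ by the previous step. For a fixed $\varphi$, distinct pairs $(L,b)$ give distinct functions: if $L+b=L'+b'$, evaluating at $0$ gives $b=b'$, and then $L=L'$. Therefore all the $f$ are distinct functions, so their canonical polynomials are distinct, and $|\mathcal S_{h,a}|=2^{2a}C_{h,a}$ by Proposition~\ref{prop:schmidt-form-count}.
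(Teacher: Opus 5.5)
Your proposal is correct and follows the paper's proof essentially step for step:
\begin{itemize}
\item polar rank $2a$ from \eqref{eq:quadratic-polar-rank} and evenness of polar rank;
\item pairwise distinct polar forms via Schmidt's rank bound on the nonzero difference;
\item Lemmas~\ref{lem:radical-shifts} and~\ref{lem:trace-degree} for the per-$\varphi$ count, invariance and degree;
\item distinctness across different $\varphi$ via equality of polar forms.
\end{itemize}

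There are two small differences. First, for even $h$ you derive polar rank $2a$ from the weight through the character sum, where parity alone suffices: the quadratic rank is even, the polar rank is even, and they differ by an element of $\{0,1\}$. Second, in the last step you should first evaluate at $0$ to obtain $b=b'$, because a constant $b$ contributes $b$, not $0$, to $f(x+y)+f(x)+f(y)$.
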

\begin{proof}
We count the relevant Schmidt classes, identify their polar ranks,
and check that the affine shifts give distinct BCH words.

Proposition~\ref{prop:schmidt-form-count} supplies the
cardinality $C_{h,a}$.
The selected quadratic ranks are $2a+1$ (odd $h$) and $2a$
(even $h$, elliptic). By~\eqref{eq:quadratic-polar-rank} and
Proposition~\ref{prop:even-polar-rank}, both classes have
polar rank $2a$:
for odd $h$, the quadratic and polar ranks differ by one, whereas for even
$h$ they coincide. Thus their polar radicals have dimension
$h-2a$.

The selected forms have distinct polar forms: otherwise their
nonzero difference, still in the additive family $\mathcal F_{h,a}$,
would have zero polar form and hence be a nonzero linear function
of quadratic rank $1$, contradicting Schmidt's lower bound
$2a\ge2$ on quadratic rank stated above.

It remains to pass from the selected quadratic forms to BCH words.
Lemma~\ref{lem:radical-shifts} gives exactly $2^{2a}$ affine
shifts for each selected $\varphi$ that are
$\operatorname{rad}(B_\varphi)$-invariant and have weight $D_{h,a}$. Since
$\mathcal Q_{h,a}\subseteq\mathcal F_{h,a}$, these shifts satisfy
both hypotheses of Lemma~\ref{lem:trace-degree}, which gives
$\deg c_f=2^{h-1}+\delta=2^h-D_{h,a}$.
If two such functions from distinct selected $\varphi$ agreed,
their underlying quadratic forms would differ by a linear function,
contrary to the distinctness of the selected polar forms.
Their degree and weight place their canonical polynomials in
$\operatorname{eBCH}(2^h,D_{h,a})$ by~\eqref{eq:ebch-definition}.
Thus the disjoint contributions from the $C_{h,a}$
forms give $|\mathcal S_{h,a}|=2^{2a}C_{h,a}=M_{h,a}$.

\end{proof}

\subsection{A common quotient gives smaller RS anchors}\label{subsec:common-quotient}

Proposition~\ref{prop:enumeration} supplies many binary words,
each invariant under a large additive subspace, though this
subspace may vary from word to word. Before selecting one
subspace shared by many words, we need to understand the effect
of invariance on their canonical polynomials. In particular,
we need to reduce polynomial degree and Hamming weight without
identifying distinct words.

Fix any additive subspace $W\le K$, and write
\[
 \pi:=\ell_W,\qquad w:=|W|,\qquad
 V:=\pi(K),\qquad n_0:=|V|=2^h/w.
\]
The following lemma establishes the required polynomial descent
for any $W$-invariant $\F_2$-valued function.

\begin{lemma}[Polynomial descent through a subspace quotient]\label{lem:subspace-descent}
Let $f:K\to\F_2$ be $W$-invariant, and let $c_f$
be its canonical polynomial of degree $<2^h$. Then:
\begin{enumerate}
\item \emph{Polynomial descent.}
There is a unique polynomial $f^{\downarrow}_W\in K[X]$ of degree $<n_0$,
taking values in $\F_2$ on $V$, such that $c_f=f^{\downarrow}_W\circ\pi$.
\item \emph{Degree scaling.}
For every $f$, $\deg c_f=w\deg f^{\downarrow}_W$, with the convention
$\deg0=-\infty$ and $w(-\infty)=-\infty$.
\item \emph{Weight scaling.}
The Hamming weights satisfy
$\wt_K(f)=w\,\wt_V(f^{\downarrow}_W)$, and the numbers of zero
positions have the same scaling factor $w$.
\item \emph{One-to-one correspondence.}
The map $f\mapsto f^{\downarrow}_W$ is a bijection between $\F_2$-valued
$W$-invariant functions $K\to\F_2$
and polynomials of degree $<n_0$ taking values in $\F_2$
on $V$. In particular, distinct $c_f$ yield distinct $f^{\downarrow}_W$.
\end{enumerate}
\end{lemma}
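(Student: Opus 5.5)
The plan is to build $f^{\downarrow}_W$ by interpolation on $V$ and then use the fibre structure of $\pi=\ell_W$ to transfer everything back to $K$. Recall from the discussion of subspace polynomials that $\pi:K\to V$ is $\F_2$-linear and surjective, with kernel $W$, and that each fibre $\pi^{-1}(\nu)$ is a coset of $W$ of size $w$.

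Since $f$ is $W$-invariant, it is constant on every fibre, so it factors as $f=\bar f\circ\pi$ for a unique function $\bar f:V\to\F_2$. Let $f^{\downarrow}_W$ be the unique polynomial of degree $<n_0$ interpolating $\bar f$ on $V$. Then $f^{\downarrow}_W\circ\pi$ agrees with $f$ on all of $K$. Its degree is $w\deg f^{\downarrow}_W\le w(n_0-1)=2^h-w<2^h$, because $\pi$ is monic of degree $w$, so the leading term of the composition is the leading coefficient of $f^{\downarrow}_W$ times $X^{w\deg f^{\downarrow}_W}$, with no cancellation. The canonical polynomial $c_f$ is the unique polynomial of degree $<2^h$ representing $f$, so $c_f=f^{\downarrow}_W\circ\pi$.

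For uniqueness in item~1, suppose a second polynomial $g$ of degree $<n_0$, $\F_2$-valued on $V$, also satisfies $c_f=g\circ\pi$. Evaluating at preimages shows $g=\bar f=f^{\downarrow}_W$ on $V$. Two polynomials of degree $<n_0=|V|$ that agree on $V$ are equal. Item~2 is the degree computation above; the case $f=0$ gives $f^{\downarrow}_W=0$, consistent with the convention $\deg 0=-\infty$.

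For item~3, note that $f(\beta)=1$ exactly when $\pi(\beta)$ lies in the support of $\bar f$. Each point of $V$ has exactly $w$ preimages, so $\wt_K(f)=w\,\wt_V(f^{\downarrow}_W)$, and the zero positions scale by the same factor.

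For item~4, the inverse map sends $g$ (of degree $<n_0$, $\F_2$-valued on $V$) to the function $g\circ\pi$ on $K$. This function is $\F_2$-valued because $\pi(K)=V$. It is $W$-invariant because $\pi(x+\omega)=\pi(x)$ for $\omega\in W$. Uniqueness in item~1 makes the two maps mutually inverse.

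There is no serious obstacle here. The only point needing care is identifying $f^{\downarrow}_W\circ\pi$ with the canonical representative: this requires the strict bound $\deg<2^h$, which is exactly what $\deg f^{\downarrow}_W\le n_0-1$ guarantees.
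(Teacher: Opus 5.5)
Your proposal is correct and follows essentially the same route as the paper. You factor $f$ through $\pi=\ell_W$ using $W$-invariance and interpolate on $V$. You then identify $f^{\downarrow}_W\circ\pi$ with $c_f$ via the bound $w(n_0-1)<2^h$ and uniqueness of degree-$<2^h$ representatives, and read off degree scaling, weight scaling (via the $w$-element fibres), and the inverse map $g\mapsto g\circ\pi$; writing $f=\bar f\circ\pi$ is just your phrasing of the paper's coset isomorphism $K/W\cong V$.
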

\begin{proof}
Polynomial descent and degree scaling here specialize
Berman--Shany--Tamo~\cite[Lemma~A.2]{BST}.
We give a direct proof that also establishes weight scaling
and the one-to-one correspondence.

\emph{(1) Polynomial descent.}
The subspace polynomial $\pi$ is $\F_2$-linear with kernel
$W$, and so induces an isomorphism $K/W\cong V$.
Both $f$ and $\pi$ are $W$-invariant, so each coset
$B=x+W$ determines a point
$(\pi(B),f(B))\in V\times\F_2$, where the two entries
denote their common values on $B$. The isomorphism
shows that the $n_0$ cosets have distinct
$\pi$-values, comprising all of $V$. Hence there is a unique
polynomial $f^{\downarrow}_W\in K[X]$ of degree $<n_0$ interpolating
these $n_0$ points. By construction,
$f(x)=f^{\downarrow}_W(\pi(x))$ for every $x\in K$.
As $\deg\pi=w$ and $wn_0=2^h$, the polynomial
$f^{\downarrow}_W\circ\pi$ has degree $<2^h$, as does $c_f$.
Since they agree throughout $K$, they coincide as
polynomials: $c_f=f^{\downarrow}_W\circ\pi$. The same interpolation
uniqueness shows that no other polynomial of degree $<n_0$
can satisfy this identity.

\emph{(2) Degree scaling.}
For $f^{\downarrow}_W\ne0$, polynomial composition gives
$\deg c_f=(\deg\pi)(\deg f^{\downarrow}_W)=w\deg f^{\downarrow}_W$,
including when $f^{\downarrow}_W$ is a nonzero constant.
If $f^{\downarrow}_W=0$, then $c_f=0$ by part~\textup{(1)},
so the identity holds
with the stated convention.

\emph{(3) Weight scaling.}
Each $\beta\in V$ has exactly $w$ preimages in $K$,
and $f$ takes the value $f^{\downarrow}_W(\beta)$ at all of them.
Counting ones and zeros gives the two scaling identities.

\emph{(4) One-to-one correspondence.}
Conversely, any polynomial $g$ of degree $<n_0$ that is
$\F_2$-valued on $V$ yields a $\F_2$-valued function
$g\circ\pi$ on $K$, which is $W$-invariant.
By uniqueness in part~\textup{(1)},
$(g\circ\pi)^{\downarrow}_W=g$. In the other direction,
part~\textup{(1)} gives $f=f^{\downarrow}_W\circ\pi$
as functions on $K$. These maps are inverse bijections,
and distinct canonical polynomials descend to distinct polynomials.
\end{proof}

We now return to the Schmidt family and seek a subspace $W$
shared by many of its words. Let $m=\lfloor h/2\rfloor$ and
$1\le a\le m$, and recall that
$D_{h,a}=2^{h-1}-2^{h-a-1}$ and
$M_{h,a}=|\mathcal S_{h,a}|=2^{2a}C_{h,a}$.
Each $f=\varphi+L+b\in\mathcal S_{h,a}$ is invariant under
$\operatorname{rad}(B_\varphi)$ of dimension $h-2a$. Selecting a common $W$ inside
many such subspaces will let Lemma~\ref{lem:subspace-descent}
produce distinct polynomials on the same smaller domain $V$.

To obtain an anchor domain of codimension $c$, we quotient
by a $c$-dimensional subspace $W$: the image of $\ell_W$
then has dimension $h-c$. Such a $W$ must fit inside the
$h-2a$-dimensional invariance spaces. We also require the
descended weight to be even so that
Proposition~\ref{prop:additive} applies.

Fix $0\le c\le h-2a$ with
$h-a-c-1\ge1$ and set
\[
 n_0=2^{h-c},\qquad
 D:=D_{h,a}/2^c=(2^a-1)2^{h-a-c-1},
 \qquad
 \Gamma_{h,a,c}:=M_{h,a}
 \frac{\gbinom{h-2a}{c}{2}}{\gbinom{h}{c}{2}}.
\]
The displayed factorization of $D$ makes its positivity,
integrality, and evenness immediate.

\begin{proposition}[A common quotient RS anchor]\label{prop:common-quotient-anchor}
There exists a $c$-dimensional additive subspace $W\le K$
such that
\[
 \bigl|\{f\in\mathcal S_{h,a}:
       f\text{ is }W\text{-invariant}\}\bigr|
 \ge\Gamma_{h,a,c}.
\]
With $V=\ell_W(K)$ of size $n_0$, these functions
yield at least $\Gamma_{h,a,c}$ distinct polynomials
$p_f\in K[X]$ of degree $D/2$, all agreeing with the same
received function
\[
 \rho(\beta)=\beta^{2^{h-1}}\qquad(\beta\in V)
\]
at exactly $D$ evaluation points, all of which belong to $V$.
\end{proposition}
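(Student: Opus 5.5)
The argument has two parts. We first find a common subspace $W$ by double counting, and then descend and apply the additive-domain correspondence.

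\emph{Double count.} Consider the incidence pairs $(f,W)$ with $f\in\mathcal S_{h,a}$ and $W\le K$ a $c$-dimensional subspace satisfying $W\subseteq\operatorname{rad}(B_{\varphi_f})$, where $\varphi_f$ is the quadratic form underlying $f$. This $\varphi_f$ is well defined because Proposition~\ref{prop:enumeration} shows that shifts coming from distinct $\varphi$ are distinct. By the same proposition, each $f$ is invariant under $\operatorname{rad}(B_{\varphi_f})$, which has dimension exactly $h-2a$. Hence every $W$ in such a pair makes $f$ $W$-invariant. Each $f$ lies in exactly $\gbinom{h-2a}{c}{2}$ pairs, so the number of pairs is $M_{h,a}\gbinom{h-2a}{c}{2}$. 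There are $\gbinom{h}{c}{2}$ candidate subspaces in total. By averaging, some $W$ occurs in at least $\Gamma_{h,a,c}$ pairs, and every $f$ in those pairs is $W$-invariant. This proves the first claim. The hypothesis $c\le h-2a$ ensures that the Gaussian binomial in the numerator is positive.

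\emph{Descent.} Fix this $W$, with $w=2^c$, $\pi=\ell_W$ and $V=\pi(K)$, so that $|V|=n_0=2^{h-c}$. Apply Lemma~\ref{lem:subspace-descent} to each selected $f$.
\begin{itemize}
\item Part (4) shows that the descended polynomials $g_f:=f^{\downarrow}_W$ are pairwise distinct and $\F_2$-valued on $V$.
\item Part (3) gives $\wt_V(g_f)=D_{h,a}/2^c=D$.
\item Part (2), combined with $\deg c_f=2^h-D_{h,a}$ from Proposition~\ref{prop:enumeration}, gives $\deg g_f=(2^h-D_{h,a})/2^c=n_0-D$.
\end{itemize}
The factorization of $D$ shows that $D$ is even and $D\ge2$, since $h-a-c-1\ge1$. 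Also $D<n_0/2<n_0$, so the hypotheses of Proposition~\ref{prop:additive} hold.

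\emph{Correspondence.} Proposition~\ref{prop:additive} now assigns to each $g_f$ a polynomial $p_f$ defined by $\ell_V/g_f=\nu_V(p_f^2+X)$. Each $p_f$ has degree exactly $D/2$ and agrees with $\beta\mapsto\beta^{2^{h-1}}$ at exactly $D$ points of $V$. Because that proposition gives a bijection, distinct $g_f$ yield distinct $p_f$. This gives at least $\Gamma_{h,a,c}$ distinct messages against the single received function $\rho$ on $V$.

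No step here is a real obstacle. The point that needs the most care is the double count: it must be organized so that each $f$ is counted with its own full-dimensional radical $\operatorname{rad}(B_{\varphi_f})$. Proposition~\ref{prop:enumeration} supplies both the exact dimension $h-2a$ and the invariance under the whole radical, so the per-$f$ count is exactly $\gbinom{h-2a}{c}{2}$ and not merely a lower bound. The remaining bookkeeping consists of checking that degree and weight divide exactly by $2^c$, which Lemma~\ref{lem:subspace-descent} guarantees.
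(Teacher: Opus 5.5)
Your proposal is correct and follows essentially the same route as the paper: a double count of pairs $(f,W)$ with $W$ ranging over $c$-dimensional subspaces of the $(h-2a)$-dimensional radicals, then descent via Lemma~\ref{lem:subspace-descent} and the additive-domain correspondence of Proposition~\ref{prop:additive}. The only difference is cosmetic: you define incidence by $W\subseteq\operatorname{rad}(B_{\varphi_f})$ rather than by $W$-invariance, so you use only the easy direction of the paper's ``precisely when'' equivalence.
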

\begin{proof}
Consider the incidence set
\[
 \mathcal I_{h,a,c}:=
 \bigl\{(f,W): f\in\mathcal S_{h,a},\
 W\le_{\F_2}K,\ \dim_{\F_2}W=c,\
 f\text{ is }W\text{-invariant}\bigr\}.
\]
For every $f=\varphi+L+b\in\mathcal S_{h,a}$,
its space of all invariant directions is
$\operatorname{rad}(\varphi+L)=\operatorname{rad}(B_\varphi)$, of dimension $h-2a$.
Hence $f$ is $W$-invariant precisely when
$W\subseteq \operatorname{rad}(B_\varphi)$, and
\[
 |\mathcal I_{h,a,c}|
 =|\mathcal S_{h,a}|\gbinom{h-2a}{c}{2}
 =M_{h,a}\gbinom{h-2a}{c}{2}.
\]
There are $\gbinom{h}{c}{2}$ possible $c$-dimensional
subspaces $W\le K$. Counting the pairs by their second
component therefore gives some $W$ for which at least
\[
 \frac{|\mathcal I_{h,a,c}|}{\gbinom{h}{c}{2}}
 =\Gamma_{h,a,c}
\]
members of $\mathcal S_{h,a}$ are $W$-invariant.

For this fixed $W$, Lemma~\ref{lem:subspace-descent}
gives distinct $\F_2$-valued polynomials
$f^{\downarrow}_W$ on the common domain $V$.
By degree and weight scaling, respectively,
\[
 \begin{aligned}
 \deg f^{\downarrow}_W
 &=\frac{\deg c_f}{2^c}
   =\frac{2^h-D_{h,a}}{2^c}
   =2^{h-c-1}+2^{h-a-c-1}=n_0-D,\\
 \wt_V(f^{\downarrow}_W)
 &=\frac{\wt_K(f)}{2^c}
   =\frac{D_{h,a}}{2^c}
   =2^{h-c-1}-2^{h-a-c-1}=D.
 \end{aligned}
\]
Here $V=\ell_W(K)$ is an additive subspace, because
$\ell_W$ is $\F_2$-linear. Hence its subspace
polynomial $\ell_V$ has nonzero constant derivative,
as required in Proposition~\ref{prop:additive}.
Since $D$ is even and $2\le D<n_0$, that
correspondence yields distinct message polynomials
of degree
\[
 \deg p_f=\frac D2
 =2^{h-c-2}-2^{h-a-c-2},
\]
each with exactly $D$ agreements against
$\rho(\beta)=\beta^{2^{h-1}}$ on $V$.
Both the normalization and evaluation domain depend
only on $V$, so the received word is common to the family.
\end{proof}

\begin{corollary}[Anchor families of prescribed codimension]
\label{cor:quotient-anchor-family}
For every fixed integer $c\ge0$, there is an anchor
family $(\mathcal A_h)$ of codimension $c$ satisfying
\[
 |\mathcal A_h|\ge2^{h^2/8-O_c(h)}
\]
for all sufficiently large $h$.
\end{corollary}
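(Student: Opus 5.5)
The plan is to apply Proposition~\ref{prop:common-quotient-anchor} with a parameter $a=a(h)$ chosen to balance two requirements. The agreement fraction $D/n_0=\tfrac12-2^{-a-1}$ must tend to $\tfrac12$, and the count $\Gamma_{h,a,c}$ must stay of size $2^{h^2/8-O_c(h)}$. Concretely, I take $a=\lfloor h/4\rfloor$ (or any $a\to\infty$ with $a=h/4+O(1)$) and, for fixed $c$, take $h$ large. The side conditions then hold: $c\le h-2a$ and $h-a-c-1\ge1$. For each such $h$ we set $V_h:=\ell_W(K)$, with $W$ the $c$-dimensional subspace supplied by Proposition~\ref{prop:common-quotient-anchor}; then $V_h$ is an additive subspace of codimension exactly $c$. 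We let $\mathcal A_h$ be the resulting set of polynomials $p_f$. For small $h$ nothing is needed, since the anchor definition is asymptotic and the bound is only claimed for large $h$.

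First I verify the anchor properties. Each $p_f$ has degree $D/2=n_0/4-2^{h-a-c-2}<n_0/4$. Each has exactly $D$ agreements with $X^{2^{h-1}}$ on $V_h$. The agreement fraction is therefore $D/n_0=\tfrac12-2^{-a-1}=\tfrac12-o(1)$, because $a\to\infty$. The messages are distinct and the family is nonempty once the count below is positive.

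Next, the counting, which is the main step. We have $\Gamma_{h,a,c}=2^{2a}C_{h,a}\gbinom{h-2a}{c}{2}/\gbinom{h}{c}{2}$. The ratio of Gaussian binomials is at least roughly $2^{c(h-2a)-c(h-c)}\cdot 2^{-O(c)}=2^{-2ac-O(c)}$. This uses the standard bounds $2^{c(N-c)}\le\gbinom{N}{c}{2}\le 4\cdot 2^{c(N-c)}$ (for instance, $\prod_{i\ge1}(1-2^{-i})^{-1}\le4$). Since $a\le h/2$, this loss is $2^{-O_c(h)}$. The prefactor $C_{h,a}\ge\gbinom{m}{a}{4}\ge 4^{a(m-a)}=2^{2a(m-a)}$, where $m=\lfloor h/2\rfloor$. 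With $a=\lfloor h/4\rfloor$ we get $2a(m-a)=2\cdot\tfrac h4\cdot\tfrac h4-O(h)=h^2/8-O(h)$, the maximum of $2a(h/2-a)$. Altogether $\Gamma_{h,a,c}\ge 2^{h^2/8-O(h)-O_c(h)}=2^{h^2/8-O_c(h)}$.

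The only delicate point is checking that the rounding in $m$ and $a$ and the Gaussian-binomial constants lose only $O(h)$ in the exponent. Both parities of $h$ must be handled. For even $h$ the prefactor $2^a-1\ge1$ is harmless, and the elliptic class count is still bounded below by the Gaussian binomial. I expect this to be routine bookkeeping rather than a genuine obstacle. A second point to confirm is that the $o(1)$ in the anchor definition is uniform over $\mathcal A_h$; this is immediate because every member has exactly the same agreement $D$.
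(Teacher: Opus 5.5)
Your proposal is correct and follows essentially the same route as the paper: take $a=h/4+O(1)$ (the paper uses $a=\lceil m/2\rceil$), apply Proposition~\ref{prop:common-quotient-anchor} with $V_h=\ell_W(K)$, read off degree $D/2<n_0/4$ and agreement fraction $\tfrac12-2^{-a-1}\to\tfrac12$, and bound $\Gamma_{h,a,c}$ below by $2^{h^2/8-O_c(h)}$ using Gaussian-binomial estimates. The only difference is that you inline a one-sided version of the counting estimate, whereas the paper cites the two-sided estimate of Lemma~\ref{lem:counting-estimates}; the lower bound is all the corollary needs.
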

\begin{proof}
Put $m=\lfloor h/2\rfloor$. For fixed $c$ and feasible $a$,
Lemma~\ref{lem:counting-estimates} gives
\[
 \log_2\Gamma_{h,a,c}
 =2a(m-a)-2ac+O_c(h).
\]
The leading quadratic term is maximized near $a=m/2$,
while selecting a common $c$-dimensional subspace costs
only $O_c(h)$ bits. We therefore take $a=\lceil m/2\rceil$.
For fixed $c$ and large $h$, the constraints
$c\le h-2a$ and $h-a-c-1\ge1$ hold, so
Proposition~\ref{prop:common-quotient-anchor} applies.
Let $V_h=\ell_W(K)$ be the resulting common domain,
of codimension $c$. Its messages have degree
$D/2<|V_h|/4$ and agree with the received function at
\[
 D=\frac{|V_h|}{2}-2^{h-a-c-1}
    =\bigl(\tfrac12-o(1)\bigr)|V_h|
\]
coordinates. These polynomials form an anchor
family of codimension $c$. Moreover, this choice gives
\[
 \log_2\Gamma_{h,a,c}=h^2/8-O_c(h).
\]
\end{proof}

\begin{remark}[Relation to classical down-conversion]
Classical BCH down-conversion, used in Kasami--Lin's
minimum-distance work and revisited by Berman--Shany--Tamo,
passes from suitable BCH codewords to lower-weight words
of smaller designed distance~\cite{KasamiLin,BST}.
The identity $c_f=f^{\downarrow}_W\circ\ell_W$ expresses
the corresponding descent in the polynomial model:
$W$-invariance permits passage to a smaller additive
evaluation domain, scaling both weight and degree by
$|W|$. The additional ingredient here is a common $W$
for superpolynomially many words, so their distinctness
and multiplicity survive the descent.
\end{remark}
\subsection{Completion by padding}

Fix a target rate $0<R<1$. A dyadic density
$\lambda=2^{-c}$ is \emph{feasible} if $\lambda/4\le R$:
the anchor messages then fit within the target degree bound.
For such $\lambda$, Corollary~\ref{cor:quotient-anchor-family}
provides an anchor family of codimension $c$ and
size $2^{h^2/8-O_c(h)}$. By
Proposition~\ref{prop:asymptotic-padding}, padding
retains at least a $2^{-h}$ fraction of the family
at every fixed agreement below $\alpha_\lambda(R)$.
The resulting list size is $2^{h^2/8-O_c(h)}$.

This already proves the existence of a fixed agreement
gap at every rate. Indeed, choosing any fixed dyadic
density
\[
 0<\lambda<\min\{4R,\tfrac43(1-R)\}
\]
makes padding both feasible and disjoint, giving
$\alpha_\lambda(R)=R+\lambda/4>R$.
For example, at $R=3/4$, the choice $c=2$
(and hence $\lambda=1/4$) yields limiting agreement
$13/16$ by disjoint padding.
It remains to choose $\lambda$ for the best
agreement bound obtainable from these families.

To see the tradeoff, set $g=1-R$.
The difference between the guaranteed agreement
and the target rate is
\[
 \alpha_\lambda(R)-R=
 \begin{cases}
 \lambda/4,&\lambda\le4g/3,\\
 g/2-\lambda/8,&\lambda\ge4g/3.
 \end{cases}
\]
The gain increases with $\lambda$ while padding
is disjoint, but decreases once padding overlaps
the anchor domain. We therefore compare feasible
dyadic densities around the switching point
$\lambda=4g/3$.

For $0<R<1/4$, feasibility forces
$\lambda\le4R<1$; all these choices lie in the
disjoint regime. We take the largest feasible
dyadic density. Let $j\ge3$ be the unique
integer with $2^{-j}\le R<2^{-j+1}$, and set
$\lambda_*(R)=2^{-j+2}$.

For $1/4\le R<1$, every dyadic density
$\lambda\le1$ is feasible. Let $j\ge1$ be the
unique integer with $2^{-j}\le1-R<2^{-j+1}$.
The comparison in
Appendix~\ref{app:optimal-dyadic-choice}
shows that $\lambda_*(R)=2^{-j+1}$ maximizes
the gain. The chosen density is disjoint or
overlapping according to which side of
$1-R=3\lambda_*(R)/4$ the rate lies.

Substituting these choices into
\eqref{eq:padding-agreement} yields
\begin{equation}\label{eq:optimized-agreement-envelope}
 \alpha_*(R):=
 \begin{cases}
 R+2^{-j},&0<R<1/4,\\[3pt]
 R+2^{-j-1},
    &R\ge1/4,\quad 1-R\ge3\cdot2^{-j-1},\\[6pt]
 \dfrac{1+R}{2}-2^{-j-2},
    &R\ge1/4,\quad 1-R\le3\cdot2^{-j-1}.
 \end{cases}
\end{equation}
The two high-rate branches agree at
$R=1-3\cdot2^{-j-1}$. Within each dyadic
interval the switch changes the padding regime,
not the optimal quotient density.

\begin{proposition}[Optimality of the dyadic agreement envelope]
\label{prop:optimal-agreement-envelope}
For every $0<R<1$, the density $\lambda_*(R)$ defined above
maximizes $\alpha_\lambda(R)$ over all feasible dyadic choices
$\lambda=2^{-c}\le4R$. In particular,
\[
 \alpha_*(R)
 =\alpha_{\lambda_*(R)}(R)
 =\max_{c\ge0:\,2^{-c}\le4R}\alpha_{2^{-c}}(R).
\]
\end{proposition}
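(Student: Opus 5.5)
The plan is to treat $\alpha_\lambda(R)-R$ as a function of $\lambda$ on the feasible dyadic set $\{2^{-c}:c\ge0,\ 2^{-c}\le 4R\}$ and locate its maximizer directly. Write $g=1-R$. By \eqref{eq:padding-agreement},
\[
 G(\lambda):=\alpha_\lambda(R)-R=\min\Bigl\{\frac{\lambda}{4},\ \frac g2-\frac{\lambda}{8}\Bigr\}.
\]
This holds because the two branches cross exactly at $\lambda=4g/3$, and each branch is the smaller one on its own side. So $G$ is the minimum of an increasing and a decreasing linear function of $\lambda$, hence unimodal with peak at $\lambda=4g/3$. The maximum over the feasible dyadic set is therefore attained at one of the two feasible dyadic points nearest $4g/3$: the largest feasible dyadic below it, or the smallest dyadic above it (if that one is feasible).

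\emph{Case $0<R<1/4$.} Here $4R<1$, so every feasible $\lambda$ satisfies $\lambda\le4R<1<4g/3$, because $g>3/4$. All feasible choices thus lie on the increasing branch, and the maximizer is the largest feasible dyadic $\lambda\le 4R$. With $2^{-j}\le R<2^{-j+1}$ this is $2^{-j+2}=\lambda_*(R)$, giving $\alpha_*(R)=R+2^{-j}$.

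\emph{Case $1/4\le R<1$.} Every $\lambda=2^{-c}\le1$ is feasible. Let $2^{-j}\le g<2^{-j+1}$. Since $4g/3\in[2^{-j+2}/3,\,2^{-j+3}/3)$, the dyadics bracketing $4g/3$ are among $2^{-j}$, $2^{-j+1}$ and $2^{-j+2}$ (the last only if it is $\le1$). I would compare the candidates explicitly.
\begin{itemize}
\item For $\lambda=2^{-j+1}$, $G=\min\{2^{-j-1},\,g/2-2^{-j-2}\}$.
\item For $\lambda=2^{-j}$, we have $2^{-j}\le g<4g/3$, so $\lambda$ is on the increasing branch and $G=2^{-j-2}$. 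This is $\le g/2-2^{-j-2}$ because $g\ge2^{-j}$, and $\le 2^{-j-1}$, so it never beats $2^{-j+1}$. Smaller dyadics are worse still, by monotonicity.
\item For $\lambda=2^{-j+2}$, we have $2^{-j+2}>4g/3$ because $g<2^{-j+1}<3\cdot2^{-j}$, so $\lambda$ is on the decreasing branch and $G=g/2-2^{-j-1}$. This is strictly less than $g/2-2^{-j-2}$ and, since $g<2^{-j+1}$, also less than $2^{-j-1}$. Larger dyadics are worse still.
\end{itemize}
Hence $\lambda_*(R)=2^{-j+1}$. Plugging into the two branches gives the stated formulas in \eqref{eq:optimized-agreement-envelope}, with the switch at $g=3\cdot2^{-j-1}$.

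The main care point is the boundary bookkeeping. Ties at the dyadic endpoints need checking ($g=2^{-j}$ exactly, or $R$ exactly a power of $2$ in the low-rate case), since strictness there only affects which maximizer is named, not the maximum value. The edge case $j=1$ also needs care, where $2^{-j+2}=2>1$ is infeasible and the comparison simply omits it.
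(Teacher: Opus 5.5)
Your proposal is correct and follows essentially the same route as the paper's proof. It uses the same split at $R=1/4$, the same optimal candidate $d=2^{-j+1}$ when $R\ge 1/4$, and the same comparison against smaller densities on the increasing (disjoint) branch and larger densities on the decreasing (overlapping) branch. The only cosmetic difference is in how the comparison is organized: you write the gain as $\min\{\lambda/4,\,g/2-\lambda/8\}$ and check the two neighbours $2^{-j}$ and $2^{-j+2}$ before invoking unimodality, whereas the paper bounds all $\lambda\le d/2$ and all $\lambda\ge 2d$ in one step.
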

The proof is given in Appendix~\ref{app:optimal-dyadic-choice}.

\begin{theorem}[All-rate full-field RS lower bound]\label{thm:allrate}
Fix $0<R<1$ and let $\alpha_*(R)$ be defined by
\eqref{eq:optimized-agreement-envelope}. For every fixed
$R<\alpha<\alpha_*(R)$, there exists $h_0=h_0(R,\alpha)$
such that all $h\ge h_0$ admit a received word for the
full-field RS code over $\F_{2^h}$ of dimension
$\lfloor R2^h\rfloor$ with at least
\[
 2^{h^2/8-O_R(h)}
\]
codewords at agreement fraction at least $\alpha$.
\end{theorem}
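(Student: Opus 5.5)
The plan is to combine three earlier results: the optimal-density selection of Proposition~\ref{prop:optimal-agreement-envelope}, the quotient anchor families of Corollary~\ref{cor:quotient-anchor-family}, and the padding step of Proposition~\ref{prop:asymptotic-padding}.

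Given $R$, let $\lambda_*(R)=2^{-c_*}$ be the dyadic density defined before~\eqref{eq:optimized-agreement-envelope}. Here $c_*=c_*(R)$ is a fixed integer depending only on $R$: it equals $j-2$ for $R<1/4$ and $j-1$ for $R\ge1/4$. By Proposition~\ref{prop:optimal-agreement-envelope}, this choice is feasible, that is $\lambda_*\le4R$, and
\[
\alpha_{\lambda_*}(R)=\alpha_*(R).
\]
Fix $\alpha$ with $R<\alpha<\alpha_*(R)$.

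Corollary~\ref{cor:quotient-anchor-family} with $c=c_*$ gives an anchor family $(\mathcal A_h)$ of codimension $c_*$. It has size $|\mathcal A_h|\ge2^{h^2/8-O_{c_*}(h)}$ for all large $h$, and its domains are $V_h=\ell_W(K)$.

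Next, apply Proposition~\ref{prop:asymptotic-padding} to this family with density $\lambda_*$ and target rate $R$. Feasibility $\lambda_*\le4R$ is exactly its hypothesis. We also need $h\ge c_*+2$ so that $n_0/4$ is an integer, which holds for large $h$. Since $\alpha<\alpha_{\lambda_*}(R)$, the proposition yields, for all $h$ beyond some $h_0(R,\alpha)$, a single received word $z_T\rho$ on $K$. At least $M_h/n$ distinct padded messages $z_Tp$ of degree $<\lfloor Rn\rfloor$ each agree with it on at least an $\alpha$ fraction of $K$. These messages are distinct codewords of $\RS_K(\lfloor R2^h\rfloor)$, because distinct polynomials of degree $<n$ give distinct evaluation vectors on $K$.

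The list size is then at least
\[
2^{h^2/8-O_{c_*}(h)}\cdot 2^{-h}=2^{h^2/8-O_R(h)},
\]
since $c_*$ depends only on $R$.

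The only point needing care is uniformity. The constants from the corollary depend on $c_*$, and the threshold $h_0$ from the padding proposition depends on how fast $\eta_h\to0$ and on the gap $\alpha_*(R)-\alpha$. Both are fixed once $R$ and $\alpha$ are fixed, so $h_0=h_0(R,\alpha)$ and the $O_R(h)$ term are legitimate.

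A secondary check is the boundary case where $R$ sits exactly at a dyadic breakpoint or at the switching rate $1-R=3\lambda_*/4$. The proposition already states that the two branches agree there, and it covers equality in $\lambda\le4R$, so these cases need no separate argument. I expect no real obstacle: the theorem is a direct assembly of the cited results, with the optimality proposition supplying the identification $\alpha_{\lambda_*}(R)=\alpha_*(R)$.
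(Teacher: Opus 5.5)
Your proposal is correct and follows the paper's proof step for step. You choose the optimal dyadic density $\lambda_*(R)=2^{-c}$, take the codimension-$c$ anchor family from Corollary~\ref{cor:quotient-anchor-family}, pad it via Proposition~\ref{prop:asymptotic-padding}, and absorb the $2^{-h}$ loss and the $O_c(h)$ term into $O_R(h)$ because $c$ depends only on $R$, with $h_0$ taken as the larger of the two thresholds.
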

\begin{proof}
Fix $R$ and $\alpha$ as in the statement, and choose the
optimal dyadic density $\lambda=\lambda_*(R)=2^{-c}\le4R$.
Corollary~\ref{cor:quotient-anchor-family} supplies
$h_1=h_1(R)$ such that every $h\ge h_1$ admits an anchor
family on a common additive domain of density $\lambda$
and size at least $2^{h^2/8-O_c(h)}$.
Since $\alpha<\alpha_\lambda(R)=\alpha_*(R)$,
Proposition~\ref{prop:asymptotic-padding} supplies
$h_2=h_2(R,\alpha)$ such that, for every $h\ge h_2$,
padding retains at least a $2^{-h}$ fraction of this family
at agreement fraction at least $\alpha$.
Taking $h_0=\max\{h_1,h_2\}$ and using that $c$ depends
only on $R$ gives $2^{h^2/8-O_R(h)}$ messages
for every $h\ge h_0$, as claimed.
\end{proof}

The boundary $\alpha_*(R)$ is a limiting construction guarantee:
the result applies at every fixed agreement strictly below it, not
necessarily at equality. For $1/4\le R<1$, Appendix~\ref{app:optimal-dyadic-choice}
also gives the simple bound
\[
 \alpha_*(R)-R\ge\frac{1-R}{4}.
\]

\section{Consequences}\label{sec:consequences}

The full-field obstruction also survives deletion of a positive
fraction of coordinates, uniformly over the deleted set.

\begin{corollary}[Robustness under arbitrary dense puncturing]\label{cor:densepuncture}
Fix $0<r<1$, put $G_r:=\alpha_*(r)-r>0$, and choose
$0<\theta<\min\{G_r,1-r\}$. For every fixed
$0<\xi<(G_r-\theta)/(1-\theta)$
and all sufficiently large $h$, if at most $\theta2^h$
coordinates are deleted from the full-field code
$\RS_{\F_{2^h}}(\lfloor r2^h\rfloor)$, the punctured
code has at least $2^{h^2/8-O_r(h)}$ distinct codewords
agreeing with one common received word on a fraction
of coordinates exceeding the punctured code's rate by at least
\[
 \frac{G_r-\theta}{1-\theta}-\xi,
\]
uniformly over the deleted set.
\end{corollary}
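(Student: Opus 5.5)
The plan is to take the list from Theorem~\ref{thm:allrate} at rate $r$ and simply restrict it to the surviving coordinates. Deleting coordinates cannot destroy agreements elsewhere, and it lowers the code length more than it lowers the agreement count.

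Fix $\alpha$ with $r<\alpha<\alpha_*(r)$, chosen close enough to $\alpha_*(r)$ that
\[
 \frac{\alpha-r-\theta}{1-\theta}>\frac{G_r-\theta}{1-\theta}-\xi .
\]
This is possible because the left side tends to $(G_r-\theta)/(1-\theta)$ as $\alpha\to\alpha_*(r)$. Theorem~\ref{thm:allrate} then gives, for all large $h$, a received word $y$ on $K$ and at least $2^{h^2/8-O_r(h)}$ distinct messages $p$ of degree $<k:=\lfloor r2^h\rfloor$, each with $\operatorname{agr}_K(p,y)\ge\alpha n$, where $n=2^h$.

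Now let $S\subseteq K$ be the deleted set, with $s:=|S|\le\theta n$, and put $U=K\setminus S$. Restricting $y$ to $U$ gives the common received word. Each message loses at most $s$ agreements, so $\operatorname{agr}_U(p,y|_U)\ge\alpha n-s$.

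The messages stay distinct as codewords of $\RS_U(k)$ because $|U|\ge(1-\theta)n>rn\ge k$ (using $\theta<1-r$). Two distinct polynomials of degree $<k$ cannot agree on $|U|>k$ points, so the list size is unchanged.

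The punctured code has length $n-s$ and rate $k/(n-s)\le rn/(n-s)$. The excess of relative agreement over rate is therefore at least
\[
 \frac{\alpha n-s-rn}{n-s}=\frac{(\alpha-r)-\sigma}{1-\sigma},\qquad \sigma:=s/n\in[0,\theta].
\]

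It remains to show this excess is smallest at $\sigma=\theta$, so that the bound holds uniformly over $S$. The function $\sigma\mapsto(x-\sigma)/(1-\sigma)$ has derivative $(x-1)/(1-\sigma)^2$, which is negative because $x=\alpha-r<1$. So the minimum over $[0,\theta]$ is attained at $\sigma=\theta$, and there it is at least the right-hand side of the displayed inequality on $\alpha$.

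Because $\alpha$ was fixed in advance, the threshold $h_0$ depends only on $r,\theta,\xi$, not on $S$, which gives the claimed uniformity. The only delicate points are the monotonicity in $\sigma$ and the choice of $\alpha$ absorbing the slack $\xi$; the rest is direct.
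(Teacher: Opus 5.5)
Your proposal is correct and follows essentially the same route as the paper's proof. You apply Theorem~\ref{thm:allrate} at an agreement slightly below $\alpha_*(r)$ (the paper writes it as $r+G_r-\eta$ with $\eta<(1-\theta)\xi/2$), restrict the received word to the surviving coordinates, get distinctness from $|U|>k$, and bound the gap by $(\alpha-r-\theta)/(1-\theta)$. Your use of $k\le rn$ and the explicit monotonicity of $\sigma\mapsto(x-\sigma)/(1-\sigma)$ just spell out the step the paper absorbs into a $-o(1)$ term.
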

Proof in Appendix~\ref{app:dense-puncturing}.

\begin{remark}[Primitive and shortened RS codes]
The primitive RS family is a special case of
Corollary~\ref{cor:densepuncture}. For any target rate
$0<R<1$, choose a source rate $r<R$ sufficiently close to
$R$ that $\alpha_*(r)>R$, and then choose $\theta,\xi>0$
small enough that the surviving agreement exceeds $R$.
Deleting the coordinate $0$ and using
$\lfloor r2^h\rfloor\le\lfloor R(2^h-1)\rfloor$
for large $h$ gives a superpolynomial list in the
primitive code at rate $R$.

The same corollary transfers to arbitrary dense
\emph{shortenings}. Write $K=\F_{2^h}$ and
$k=\lfloor r2^h\rfloor$. For $U\subseteq K$ with
$t=|U|\le\theta|K|$, first apply the corollary to
$\RS_K(k)$ punctured on $U$. Since $k+t\le |K|$
for large $h$, that punctured code is monomially
equivalent to the shortening of $\RS_K(k+t)$ on $U$:
after deleting $U$, the latter consists of evaluations
of $z_U p$ with $\deg p<k$.
Dividing the surviving coordinates by the nonzero
values of $z_U$ identifies this code with
$\RS_{K\setminus U}(k)$, preserving Hamming
agreements and list sizes. Hence the corollary's
puncturing bound transfers to these shortened
codes.
\end{remark}

This dense negative regime should be compared with two positive results. Over prime fields, Jeronimo establishes capacity-approaching $n^{O_\varepsilon(1)}$ list size on every prescribed evaluation set for every $q\ge n$~\cite{Jeronimo}. The BCPZZ theorem gives an all-rate algorithmic bound under a sufficiently large constant-factor field-size hypothesis~\cite{BCPZZ}. Over arbitrary prime powers, random puncturings achieve capacity with high probability under the characteristic-independent sufficient condition
\[
 q\ge n+k\,2^{O(1/\varepsilon^2)}
\]
of Alrabiah et al.~\cite{AGLZ}. Thus for binary extensions there is an unresolved intermediate range of the alphabet expansion ratio $q/n$ between our dense negative regime and the sufficiently expanded random positive regime; these are sufficient regions, not sharp thresholds.

\section{Further directions}\label{sec:future}

\paragraph{Sharpening the agreement boundary.}
A first quantitative question is how close fixed-gap superpolynomial lists can approach the Johnson agreement $\sqrt R$, uniformly for all sufficiently large $h$. At $R=1/2$ the present envelope reaches $5/8$, versus $1/\sqrt2$ at Johnson. Improving $5/8$ is a concrete benchmark; a superpolynomial lower bound strictly above $\sqrt R$ is impossible because Guruswami--Sudan gives a polynomial list bound there.

\paragraph{The unresolved binary puncturing regime.}
How far can the fixed-gap obstruction persist as the
evaluation-set density $n/q$ decreases? Conversely,
how close to $q/n=1$ can random evaluation sets over
binary extension fields already achieve capacity?
Section~\ref{sec:consequences} compares the presently
known dense negative and sparser random positive regions.

\paragraph{Varying the characteristic and extension degree.}
For list size, the prime-field picture is partly a large-characteristic phenomenon. Jeronimo's geometric list bound extends to finite fields of characteristic
\[
 p>\max\{k-1,B_\varepsilon\},
\]
for a constant $B_\varepsilon$ depending only on the fixed capacity slack~\cite{Jeronimo}. At fixed rate $R$, this includes arbitrary evaluation sets whenever $p>k=Rn+O(1)$ (and $p>B_\varepsilon$), so the characteristic may itself scale linearly with the block length.

This condition also marks where that argument stops applying. For full-field RS over $K=\F_{p^m}$ with fixed $m\ge2$, one has $n=q=p^m$ and $k=\Theta(q)$, so $p>k$ fails asymptotically as $p\to\infty$. The first natural test case is therefore the quadratic-extension family $\F_{p^2}$.

At the opposite extreme, our theorem gives a fixed-gap superpolynomial lower bound for full-field RS over $\F_{2^h}$ for all large $h$, while the present proof does not extend directly to odd characteristic (cf.~Remark~\ref{rem:binary-obstruction}). This leaves two basic families to understand: fixed extension degree $m\ge2$ with $p\to\infty$, and fixed characteristic $p$ with $m\to\infty$. Resolving even the quadratic-extension case, or the case of a fixed odd prime characteristic, would help determine whether the binary obstruction reflects a broader extension-field phenomenon or something genuinely special to characteristic two.

\section*{Acknowledgment}
\textbf{AI Methodology.} OpenAI's GPT-6 Astra was used extensively and was \textbf{responsible for essentially all of the mathematical ideas underlying the main result and its proof}. AI-generated or AI-assisted content appears throughout Sections~1--7 and Appendices~A--B, including approach generation, proof development, exposition, verification, literature and provenance research, and manuscript preparation. The author initiated and directed the project and, after earlier AI-generated approaches stalled, suggested seeking alternative low-rate anchors. GPT-6 Astra then developed the BCH--RS correspondence and quarter-rate anchor, the common-kernel quotient construction, and common-root padding used in the final proof. The author substantially contributed to understanding, selecting, organizing, presenting, and attributing the mathematics, and to positioning the result relative to prior work. In particular, the author introduced the manuscript's organizing definitions of anchor families and $W$-invariant functions, and guided revisions that simplified parameter bookkeeping and improved the exposition across the three stages. GPT-6 Astra also helped rederive and challenge arguments, check parameters and hidden hypotheses, and conduct referee-style audits that led to discarding earlier approaches. The author reviewed the final arguments and sources and accepts responsibility for the paper.

\clearpage
\appendix

\section{Enumerator specialization and exact counts}\label{app:count}

\subsection{Even rank of alternating polar forms}\label{app:quadratic-even-rank}
\begin{proof}[Proof of Proposition~\ref{prop:even-polar-rank}]
Let $B$ be an alternating bilinear form on a binary vector space.
If $B=0$, its rank is zero. Otherwise choose $x,y$ with $B(x,y)=1$
and put $W=\langle x,y\rangle$. The restriction $B|_W$ is
nondegenerate: if $a x+b y$ is orthogonal to both $x$ and $y$, then
$B(ax+by,x)=b$ and $B(ax+by,y)=a$, so $a=b=0$.
For any vector $v$, the vector
\[
 v+B(v,y)x+B(v,x)y
\]
is orthogonal to $W$. Hence the ambient space decomposes as
$W\oplus W^\perp$, and
\[
 \operatorname{rank} B
 =2+\operatorname{rank}(B|_{W^\perp}).
\]
Iterating proves that the rank is even.
\end{proof}

\subsection{Binary BCH enumerator specialization}
\label{app:schmidt-specialization}
We first specialize Schmidt's quadratic-form enumerators
to prove Proposition~\ref{prop:schmidt-form-count}.
The affine-shift and polynomial-degree arguments in Section~5
then supply the full-field BCH words needed for
Proposition~\ref{prop:classical-bch}.

\begin{proof}[Proof of Proposition~\ref{prop:schmidt-form-count}]
For the additive family $\mathcal F_{h,a}$ and a specified
class $\mathcal C$ of quadratic forms, Schmidt's normalized
inner-distribution coefficient is
\[
 \frac{1}{|\mathcal F_{h,a}|}
 \left|\{(\varphi,\varphi')\in\mathcal F_{h,a}^2:
                 \varphi-\varphi'\in\mathcal C\}\right|.
\]
Every fixed difference occurs exactly $|\mathcal F_{h,a}|$
times, so the coefficient counts individual forms of
$\mathcal F_{h,a}$ in $\mathcal C$.
For odd $h=2m+1$, set $q=2$, $d=2a+1$ in Schmidt's
Theorem~5.4, and $s=a+1$ in Theorem~4.7.
For even $h=2m$, set $q=2$ and Schmidt's parameter
$\delta_{\mathrm{Sch}}=a$ in Theorem~5.6, and take $s=a$,
$\tau=-1$ (the elliptic class) in Theorem~4.9.
The Gaussian coefficients use base $q^2=4$, giving
the two cases of~\eqref{eq:Cha}.
\end{proof}

\subsection{Common-quotient multiplicity estimate}
\label{app:counting-estimates}
\begin{lemma}[Common-quotient multiplicity estimate]
\label{lem:counting-estimates}
Fix an integer $c\ge0$ and put $m=\lfloor h/2\rfloor$.
Uniformly for $1\le a\le m$ satisfying
$c\le h-2a$ and $h-a-c-1\ge1$,
\[
 \log_2\Gamma_{h,a,c}
 =2a(m-a)-2ac+O_c(h).
\]
\end{lemma}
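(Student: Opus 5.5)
We take base-2 logarithms of the definition
\[
 \Gamma_{h,a,c}=2^{2a}C_{h,a}\,\gbinom{h-2a}{c}{2}\Big/\gbinom{h}{c}{2}
\]
and estimate each factor separately. Throughout we use the basic Gaussian-binomial estimate: for a prime power $r$ and $0\le a\le m$,
\[
 r^{a(m-a)}\le\gbinom{m}{a}{r}\le C\,r^{a(m-a)},
\]
with an absolute constant $C$ (for example $C=\prod_{i\ge1}(1-r^{-i})^{-1}\le4$ when $r\ge2$). This follows by writing each factor as
\[
 \frac{r^{m-j}-1}{r^{a-j}-1}=r^{m-a}\cdot\frac{1-r^{-(m-j)}}{1-r^{-(a-j)}}.
\]
The numerators $1-r^{-(m-j)}$ lie in $[1/2,1]$ and their product is bounded below by $\prod_i(1-r^{-i})$. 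The denominators contribute at most that same reciprocal constant. Hence $\log_r\gbinom{m}{a}{r}=a(m-a)+O(1)$ uniformly in $m$ and $a$.

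\emph{Schmidt factor.} By Proposition~\ref{prop:schmidt-form-count}, $C_{h,a}$ equals $\gbinom{m}{a}{4}$ times either $2^h-1$ or $2^a-1$. Both prefactors lie between $1$ and $2^h$, so their logarithms are $O(h)$. The Gaussian estimate with $r=4$ then gives
\[
 \log_2 C_{h,a}=2a(m-a)+O(h).
\]
The factor $2^{2a}$ contributes $2a\le h$, which is also $O(h)$.

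\emph{Subspace-count ratio.} For fixed $c$, the same estimate with $r=2$ gives
\[
 \log_2\gbinom{h-2a}{c}{2}=c(h-2a-c)+O(1),\qquad
 \log_2\gbinom{h}{c}{2}=c(h-c)+O(1).
\]
The hypothesis $c\le h-2a$ ensures the first coefficient is a genuine nonzero count. Taking the difference, the ratio contributes $-2ac+O(1)$.

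\emph{Combining and uniformity.} Summing the three contributions yields
\[
 \log_2\Gamma_{h,a,c}=2a(m-a)-2ac+O_c(h).
\]
All error terms are bounded uniformly over admissible $a$, because the Gaussian estimate holds uniformly in both of its arguments. The hypothesis $h-a-c-1\ge1$ is not needed for the estimate itself; it only ensures that $D$ is even in the surrounding application.

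The only step requiring care is the uniform constant in the Gaussian-binomial bounds, and this is handled by the infinite-product bound above. No other obstacle is expected, since every remaining term is visibly $O(h)$.
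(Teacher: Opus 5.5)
Your proof is correct and follows the paper's argument: it uses the same uniform Gaussian-binomial estimate via $\prod_{t\ge1}(1-r^{-t})^{-1}$, the same $O(h)$ bookkeeping for $2^{2a}$ and the prefactor $2^h-1$ or $2^a-1$, and an $O(1)$-accurate evaluation of the subspace-count ratio. The only differences are cosmetic. You apply the estimate separately to $\gbinom{h-2a}{c}{2}$ and $\gbinom{h}{c}{2}$ and subtract, whereas the paper writes the ratio directly as $2^{-2ac}P_{h,a,c}$ and bounds $P_{h,a,c}$. Also, your justification gives the lower bound $r^{a(m-a)}/C$ rather than the stated $r^{a(m-a)}$, which would need the termwise comparison $m-j\ge a-j$; this is harmless here.
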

\begin{proof}
We first estimate the Gaussian coefficients used below.
For any prime power $r$ and integers $m\ge0$, $0\le a\le m$,
\[
 \gbinom{m}{a}{r}
 =r^{a(m-a)}
 \prod_{j=0}^{a-1}
 \frac{1-r^{-(m-j)}}{1-r^{-(a-j)}}.
\]
Each ratio is at least one since $m-j\ge a-j$,
while the product is at most
\[
 C_r:=\prod_{t=1}^{\infty}(1-r^{-t})^{-1}<\infty.
\]
The infinite product converges because
$\sum_{t\ge1}r^{-t}<\infty$. Consequently,
\[
 \log_r\gbinom{m}{a}{r}=a(m-a)+O_r(1),
\]
uniformly in $m$ and $a$.

Next, Proposition~\ref{prop:schmidt-form-count} and the
definition of $M_{h,a}$ give
\[
 M_{h,a}=2^{2a}(2^{t_h}-1)\gbinom{m}{a}{4},
 \qquad
 t_h=\begin{cases}
 h,&h=2m+1,\\
 a,&h=2m.
 \end{cases}
\]
Using the Gaussian estimate with $r=4$ and
$2^{t_h-1}\le 2^{t_h}-1\le 2^{t_h}$ yields
\[
 \log_2 M_{h,a}
 =2a(m-a)+2a+t_h+O(1)
 =2a(m-a)+O(h).
\]

Finally, the definition of $\Gamma_{h,a,c}$ gives
\[
 \frac{\Gamma_{h,a,c}}{M_{h,a}}
 =\frac{\gbinom{h-2a}{c}{2}}{\gbinom{h}{c}{2}}
 =2^{-2ac}P_{h,a,c},
 \qquad
 P_{h,a,c}:=\prod_{j=0}^{c-1}
 \frac{1-2^{-(h-2a-j)}}{1-2^{-(h-j)}}.
\]
For $c\le h-2a$, every factor of $P_{h,a,c}$
is at most one and at least $1-2^{-(c-j)}$.
Consequently,
\[
 0<\prod_{t=1}^{c}(1-2^{-t})
 \le P_{h,a,c}\le1.
\]
The empty product is one for $c=0$. Since $c$
is fixed, taking logarithms gives
\[
 \log_2\frac{\Gamma_{h,a,c}}{M_{h,a}}
 =-2ac+O_c(1).
\]
Combining the shifted-family and quotient estimates
proves the lemma.
\end{proof}

\subsection{BCH minimum-weight multiplicity}\label{app:bch-multiplicity}

\begin{proof}[Proof of Proposition~\ref{prop:classical-bch}]
Put $m=\lfloor h/2\rfloor$. For every $1\le a\le m$, Proposition~\ref{prop:enumeration}
gives $A(2^h,D_{h,a})\ge M_{h,a}$.
Taking $c=0$ in Lemma~\ref{lem:counting-estimates},
so that $\Gamma_{h,a,0}=M_{h,a}$, gives
\[
 \log_2 M_{h,a}=2a(m-a)+O(h).
\]
Choose $a=\lceil m/2\rceil$, which maximizes
the quadratic term up to rounding, and put
$D_h^\star:=D_{h,a}$. Then
\[
 \log_2 M_{h,a}=\frac{h^2}{8}+O(h),
 \qquad
 \frac{D_h^\star}{2^h}=\frac12-2^{-a-1}=\frac12-o(1).
\]
For sufficiently large $h$,
$D_h^\star=2^{h-1}-2^{h-a-1}$ is even
and strictly smaller than $2^{h-1}$.
\end{proof}

\section{Auxiliary counts and consequence proofs}\label{app:consequence-proofs}
\subsection{Hamming-ball averaging near capacity}
\label{app:hamming-ball-count}
Fix $R\in(0,1)$ and positive constants $\eta,\gamma$ with
$\eta+\gamma<h_2(R)$, where $h_2$ is binary entropy.
Let $C$ be any $[n,k]_q$ code with
$k=\lfloor Rn\rfloor$ and
$n\le q\le2^{\gamma n}$, and put
$s=\lceil\eta n/\log_2q\rceil$.
Here $y$ is uniform in $\F_q^n$ and $B(y,t)$ denotes
the Hamming ball of radius $t$ about $y$.
Counting one Hamming sphere around each codeword gives
\[
 \begin{aligned}
 \mathbb E_y|C\cap B(y,n-k-s)|
 &\ge\binom{n}{k+s}q^{-s}
          (1-q^{-1})^{n-k-s}\\
 &\ge2^{(h_2(R)-\eta-\gamma-o(1))n}.
 \end{aligned}
\]
Here $s/n=o(1)$, so the binomial coefficient has
exponent $h_2(R)n-o(n)$, while
$q^{-s}\ge2^{-\eta n-\log_2q}$
accounts for the $\eta+\gamma$ loss.
Since $q\ge n$, the remaining factor
$(1-q^{-1})^{n-k-s}$ is $2^{-O(1)}$.
The lower bound is exponential, and the corresponding
agreement fraction is
\[
 \frac{k+s}{n}=\frac{k}{n}
        +\Theta\!\left(\frac1{\log q}\right).
\]
Using $k/n$ instead of $R$ retains the floor correction
even when $\log q=\Theta(n)$.

\subsection{Optimal density in dyadic rate ranges}
\label{app:optimal-dyadic-choice}
\begin{proof}[Proof of Proposition~\ref{prop:optimal-agreement-envelope}]
Put $g=1-R$ and $G_\lambda(R)=\alpha_\lambda(R)-R$.
Section~\ref{sec:rate-lifting} gives
\[
 G_\lambda(R)=
 \begin{cases}
 \lambda/4,&\lambda\le4g/3,\\
 g/2-\lambda/8,&\lambda\ge4g/3.
 \end{cases}
\]
Thus the gain increases with $\lambda$ in the disjoint
regime and decreases in the overlapping regime.

First suppose $0<R<1/4$. Then $4R<4g/3$,
so every feasible density $\lambda\le4R$ is
disjoint. Write $2^{-j}\le R<2^{-j+1}$ with
$j\ge3$. The largest feasible dyadic density is
$\lambda_*=2^{-j+2}$, and its agreement is
\[
 \alpha_*(R)=R+\lambda_*/4=R+2^{-j}.
\]

Now suppose $R\ge1/4$ and write
$2^{-j}\le g<2^{-j+1}$ with $j\ge1$.
All dyadic densities at most one are feasible.
Set $d=2^{-j+1}$, so
\[
 d/2\le g<d,\qquad d\le1.
\]
Every other dyadic density is at most $d/2$
or at least $2d$. If $\lambda\le d/2$,
then $\lambda\le g<4g/3$, so it is disjoint
and $G_\lambda(R)\le d/8$.
If $d$ is disjoint, $G_d(R)=d/4$; if $d$
is overlapping, then
\[
 G_d(R)=g/2-d/8\ge d/8.
\]
Thus no smaller density improves on $d$
(equality is possible when $g=d/2$).

If $\lambda\ge2d$, then $\lambda>4g/3$,
so it is overlapping, with
\[
 G_\lambda(R)=g/2-\lambda/8
 \le g/2-d/4.
\]
For overlapping $d$ this is strictly smaller
than $G_d(R)=g/2-d/8$.
For disjoint $d$, it is smaller than
$d/4=G_d(R)$, since $g<d$.
Hence $d$ is also optimal among all feasible
dyadic densities.

For this fixed optimal density, the switch between
disjoint and overlapping padding occurs at
\[
 g=3d/4=3\cdot2^{-j-1},
 \qquad R=1-3\cdot2^{-j-1}.
\]
For $j\ge2$, this lies strictly inside the dyadic
interval $1-2^{-j+1}<R\le1-2^{-j}$.
For $j=1$, it occurs at the endpoint $R=1/4$.
Consequently the two affine pieces within a high-rate
dyadic interval come from the padding-regime transition;
changes of the optimal density at dyadic endpoints are a
different phenomenon. Substituting $d=2^{-j+1}$
into the two padding formulas yields
\eqref{eq:optimized-agreement-envelope}.
At $g=d/2$ an adjacent dyadic density ties,
which justifies the half-open dyadic intervals.
\end{proof}

The feasibility thresholds $R=2^{-c}/4$ introduce new dyadic
densities and hence jumps in the constructed envelope; in particular,
its left limit at $R=1/4$ is $3/8$, whereas
$\alpha_*(1/4)=1/2$.

For $R\ge1/4$, the same comparison yields
\[
 \frac{1-R}{4}\le\alpha_*(R)-R
 \le\frac{1-R}{3}.
\]
The upper bound comes from the continuous
switch $\lambda=4g/3$; the lower bound
follows from $d\ge g$ in the disjoint regime
and $d\le2g$ in the overlapping regime
(with equality only at a dyadic endpoint).

\subsection{Arbitrary dense puncturing}\label{app:dense-puncturing}
\begin{proof}[Proof of Corollary~\ref{cor:densepuncture}]
Write $G_r=\alpha_*(r)-r$, as in the corollary.
Use Theorem~\ref{thm:allrate} at the fixed source agreement
$r+G_r-\eta$, where $0<\eta<G_r-\theta$.
Delete $t\le\theta2^h$ coordinates from the full-field code.
Every message loses at most $t$ agreements, while its degree
remains below $k=\lfloor r2^h\rfloor$ and the punctured
length $2^h-t$ eventually exceeds $k$ because
$\theta<1-r$. Thus distinctness survives, and the new
agreement-minus-rate gap is at least
\[
 \frac{(r+G_r-\eta)2^h-t-k}{2^h-t}
 \ge\frac{G_r-\eta-\theta}{1-\theta}-o(1).
\]
Choose $\eta<\min\{G_r-\theta,(1-\theta)\xi/2\}$
and then take $h$ sufficiently large.
\end{proof}


\begin{thebibliography}{99}
\setlength{\itemsep}{2pt plus 1pt minus 1pt}
\emergencystretch=2em

\bibitem{AGLZ} O.~Alrabiah, Z.~Guo, V.~Guruswami, R.~Li, and Z.~Zhang,
``Random Reed--Solomon codes achieve list-decoding capacity with linear-sized alphabets,''
\emph{Advances in Combinatorics}, 2025:8, 2025.
\url{https://doi.org/10.19086/aic.2025.8}.

\bibitem{AHS} V.~Ashvinkumar, M.~Habib, and S.~Srivastava,
``Algorithmic improvements to list decoding of folded Reed--Solomon codes,''
\emph{Proc. ACM--SIAM Symposium on Discrete Algorithms (SODA)}, pp.~880--898, 2026.
\url{https://doi.org/10.1137/1.9781611978971.35}.

\bibitem{FRI2018} E.~Ben-Sasson, I.~Bentov, Y.~Horesh, and M.~Riabzev,
``Fast Reed--Solomon interactive oracle proofs of proximity,''
\emph{Proc. ICALP}, LIPIcs 107, article 14, 2018.
\url{https://doi.org/10.4230/LIPIcs.ICALP.2018.14}.

\bibitem{BCHKS} E.~Ben-Sasson, D.~Carmon, U.~Hab\"ock, S.~Kopparty, and S.~Saraf,
``On proximity gaps of Reed--Solomon codes,''
\emph{Proc. ACM Symposium on Theory of Computing (STOC)}, pp.~1157--1167, 2026.
\url{https://doi.org/10.1145/3798129.3800827}.

\bibitem{DEEPFRI} E.~Ben-Sasson, L.~Goldberg, S.~Kopparty, and S.~Saraf,
``DEEP-FRI: Sampling outside the box improves soundness,''
\emph{Proc. ITCS}, LIPIcs 151, article 5, 2020.
\url{https://doi.org/10.4230/LIPIcs.ITCS.2020.5}.

\bibitem{BKR} E.~Ben-Sasson, S.~Kopparty, and J.~Radhakrishnan,
``Subspace polynomials and limits to list decoding of Reed--Solomon codes,''
\emph{IEEE Trans. Inform. Theory}, 56(1):113--120, 2010.
\url{https://doi.org/10.1109/TIT.2009.2034780}.

\bibitem{ACS} D.~Augot, P.~Charpin, and N.~Sendrier,
``Studying the locator polynomials of minimum weight codewords of BCH codes,''
\emph{IEEE Trans. Inform. Theory}, 38(3):960--973, 1992.
\url{https://doi.org/10.1109/18.135638}.

\bibitem{Berlekamp} E.~R.~Berlekamp,
``The weight enumerators for certain subcodes of the second order binary Reed--Muller codes,''
\emph{Information and Control}, 17(5):485--500, 1970.
\url{https://doi.org/10.1016/S0019-9958(70)90399-2}.

\bibitem{DGM1970} P.~Delsarte, J.-M.~Goethals, and F.~J.~MacWilliams,
``On generalized Reed--Muller codes and their relatives,''
\emph{Information and Control}, 16(5):403--442, 1970.
\url{https://doi.org/10.1016/S0019-9958(70)90214-7}.

\bibitem{KasamiLin} T.~Kasami and S.~Lin,
``Some results on the minimum weight of primitive BCH codes,''
\emph{IEEE Trans. Inform. Theory}, 18(6):824--825, 1972.
\url{https://doi.org/10.1109/TIT.1972.1054903}.

\bibitem{BST} A.~Berman, Y.~Shany, and I.~Tamo,
``Efficient algorithms for constructing minimum-weight codewords in some extended binary BCH codes,''
\emph{IEEE Trans. Inform. Theory}, 70(11):7673--7689, 2024.
\url{https://doi.org/10.1109/TIT.2024.3465218}.

\bibitem{BSTsubcodes} A.~Berman, Y.~Shany, and I.~Tamo,
``Explicit subcodes of Reed--Solomon codes that efficiently achieve list decoding capacity,''
\emph{IEEE Trans. Inform. Theory}, 71(8):5898--5911, 2025.
\url{https://doi.org/10.1109/TIT.2025.3575899}.

\bibitem{BGM} J.~Brakensiek, S.~Gopi, and V.~Makam,
``Generic Reed--Solomon codes achieve list-decoding capacity,''
\emph{SIAM J. Comput.}, special section STOC 2023, pp.~STOC23-118--STOC23-154, 2024.
\url{https://doi.org/10.1137/23M1598064}.

\bibitem{BCPZZ} J.~Brakensiek, Y.~Chen, A.~Putterman, Z.~Zhang, and K.~Z.~Zheng,
``Algorithmic list decoding of Reed--Solomon codes up to capacity,''
arXiv:2609.08005, September 2026.
\url{https://arxiv.org/abs/2609.08005}.

\bibitem{ChenZhang} Y.~Chen and Z.~Zhang,
``Explicit folded Reed--Solomon and multiplicity codes achieve relaxed generalized Singleton bounds,''
\emph{Proc. ACM Symposium on Theory of Computing (STOC)}, pp.~1--12, 2025.
\url{https://doi.org/10.1145/3717823.3718114}.

\bibitem{CS} E.~Crites and A.~Stewart,
``On Reed--Solomon proximity gaps conjectures,''
In \emph{CRYPTO 2026}, Part IV, LNCS 16803, pp.~128--156, Springer, 2026.
\url{https://doi.org/10.1007/978-3-032-35398-6_5}.

\bibitem{GR} V.~Guruswami and A.~Rudra,
``Explicit codes achieving list decoding capacity: Error-correction with optimal redundancy,''
\emph{IEEE Trans. Inform. Theory}, 54(1):135--150, 2008.
\url{https://doi.org/10.1109/TIT.2007.911222}.

\bibitem{GRlimits} V.~Guruswami and A.~Rudra,
``Limits to list decoding Reed--Solomon codes,''
\emph{IEEE Trans. Inform. Theory}, 52(8):3642--3649, 2006.
\url{https://doi.org/10.1109/TIT.2006.878164}.

\bibitem{GuruswamiSudan} V.~Guruswami and M.~Sudan,
``Improved decoding of Reed--Solomon and algebraic-geometry codes,''
\emph{IEEE Trans. Inform. Theory}, 45(6):1757--1767, 1999.
\url{https://doi.org/10.1109/18.782097}.

\bibitem{Jeronimo} F.~G.~Jeronimo,
``Algorithmic list decoding at capacity and optimal proximity gaps for Reed--Solomon codes,''
arXiv:2609.05870, September 2026.
\url{https://arxiv.org/abs/2609.05870}.

\bibitem{KKH} D.~Krachun, S.~Kazanin, and U.~Hab\"ock,
``Failure of proximity gaps close to capacity,''
IACR ePrint 2026/782, 2026.
\url{https://eprint.iacr.org/2026/782}.

\bibitem{KRSW} S.~Kopparty, N.~Ron-Zewi, S.~Saraf, and M.~Wootters,
``Improved list decoding of folded Reed--Solomon and multiplicity codes,''
\emph{SIAM J. Comput.}, 52(3):794--840, 2023.
\url{https://doi.org/10.1137/20M1370215}.

\bibitem{PV} F.~Parvaresh and A.~Vardy,
``Correcting errors beyond the Guruswami--Sudan radius in polynomial time,''
\emph{Proc. IEEE Symposium on Foundations of Computer Science (FOCS)}, pp.~285--294, 2005.
\url{https://doi.org/10.1109/SFCS.2005.29}.

\bibitem{Schmidt2020} K.-U.~Schmidt,
``Quadratic and symmetric bilinear forms over finite fields and their association schemes,''
\emph{Algebraic Combinatorics}, 3(1):161--189, 2020.
\url{https://doi.org/10.5802/alco.88}.

\bibitem{Sudan1997} M.~Sudan,
``Decoding of Reed--Solomon codes beyond the error-correction bound,''
\emph{Journal of Complexity}, 13(1):180--193, 1997.
\url{https://doi.org/10.1006/jcom.1997.0439}.
\end{thebibliography}
\end{document}